\newtheorem{theorem}{Theorem}
\newtheorem{lemma}[theorem]{Lemma}
\begin{document}
%
\title{Optimizing Lossy Compression Rate-Distortion from Automatic Online Selection between SZ and ZFP}
%
%
%
%

\author{Dingwen Tao,~\IEEEmembership{Member,~IEEE,}
        Sheng Di,~\IEEEmembership{Senior,~IEEE,}
        Xin Liang,
        Zizhong Chen,~\IEEEmembership{Senior,~IEEE}
        and Franck Cappello,~\IEEEmembership{Fellow,~IEEE}
\IEEEcompsocitemizethanks{\IEEEcompsocthanksitem Dingwen Tao is with the Department
of Computer Science, The University of Alabama, Tuscaloosa, AL 35487.
\IEEEcompsocthanksitem Sheng Di and Franck Cappello are with the Mathematics and Computer Science division at Argonne National Laboratory, Lemont, IL 60439.
\IEEEcompsocthanksitem Xin Liang and Zizhong Chen are with the Department of Computer Science and Engineering at University of California, Riverside, CA 92521.
}}

%
%

\markboth{TRANSACTIONS ON PARALLEL AND DISTRIBUTED SYSTEMS, VOL. XX, NO. X, MONTH 2018}%
{Shell \MakeLowercase{\textit{Tao et al.}}: Optimizing Lossy Compression Rate-Distortion form Automatic Online Selection between SZ and ZFP}
%



\IEEEtitleabstractindextext{%
\begin{abstract}
\linespread{1.0}\selectfont
With ever-increasing volumes of scientific data produced by high-performance computing applications, significantly reducing data size is critical because of limited capacity of storage space and potential bottlenecks on I/O or networks in writing/reading or transferring data. SZ and ZFP are two leading BSD licensed open source C/C++ libraries for compressed floating-point arrays that support high throughput read and write random access.
However, their performance is not consistent across different data sets and across different fields of some data sets, which raises the need for an automatic online (during compression) selection between SZ and ZFP, with minimal overhead. In this paper, the automatic selection optimizes the rate-distortion, an important statistical quality metric based on the signal-to-noise ratio. To optimize for rate-distortion, we investigate the principles of SZ and ZFP. We then propose an efficient online, low-overhead selection algorithm that predicts the compression quality accurately for two compressors in early processing stages and selects the best-fit compressor for each data field. We implement the selection algorithm into an open-source library, and we evaluate the effectiveness of our proposed solution against plain SZ and ZFP in a parallel environment with 1,024 cores. Evaluation results on three data sets representing about 100 fields show that our selection algorithm improves the compression ratio up to 70\% with the same level of data distortion because of very accurate selection (around 99\%) of the bestfit compressor, with little overhead (less than 7\% in the experiments).
\end{abstract}

\begin{IEEEkeywords}
Lossy Compression, scientific Data, rate-distortion, compression ratio, high-performance computing
\end{IEEEkeywords}}

\maketitle

\IEEEdisplaynontitleabstractindextext

%
\IEEEpeerreviewmaketitle


%
%
%
%

\section{Introduction}
\label{sec:introduction}
\IEEEPARstart{A}{n} efficient scientific data compressor is increasingly critical to the success of today's scientific research because of the extremely large volumes of data produced by today's high-performance computing (HPC) applications. The Community Earth Simulation Model (CESM) \cite{cesm,baker}, for instance, produces terabytes of data every day. In the Hardware/Hybrid Accelerated Cosmology Code (HACC) \cite{hacc} (a well-known cosmology simulation code), the number of particles to simulate could reach up to 3.5 trillion, which may produce 60 petabytes of data to store. On the one hand, such large volumes of data cannot be  stored even in a parallel file system (PFS) of a supercomputer, such as the Mira \cite{mira} supercomputer at Argonne because it has only 20 petabytes of storage space. On the other hand, the I/O bandwidth may also become a serious bottleneck. The memory of extreme-scale systems continues to grow, with a factor of 5 or more expected for the next generation of systems compared with the current one (e.g., the Aurora supercomputer \cite{aurora} has over 5 PB total memory); however, although the Burst Buffer technology \cite{burst} can relieve the I/O burden to some extent, the bandwidth of PFS  is still developing relatively slowly compared with the memory capacity and peak performance. Hence, storing application data to file systems for postanalysis will take much longer than in current systems.

Error-controlled lossy compressors for scientific data sets have been studied for years, because they not only significantly reduce data size but also keep decompressed data valid to users. The existing lossy compressors, however, exhibit largely different compression qualities depending on various data sets because of their different algorithms and the diverse features of scientific data. The atmosphere simulation (called ATM) in the CESM model, for example, has over 100 fields (i.e., variables), each of which may have largely different features.
We note that various variables or data sets work better with different compression techniques. For instance, SZ \cite{sz16,sz17,sz18} exhibits better compression quality than does ZFP \cite{zfp,ipdps18} on some data sets, whereas ZFP is better on others. With the same level of distortion of data\textemdash peak signal-to-noise ratio (PSNR), for instance\textemdash SZ exhibits a better compression ratio than does ZFP on 72.8\% of the fields in the ATM simulation data, while ZFP wins on the remaining 27.2\% fields. One key question is: Can we develop a lightweight online selection method that can estimate on the fly the best-fit compression technique for any data set, such that the overall compression quality can be improved significantly for that application?

In this work, we propose a novel online selection method for optimizing the error-controlled lossy compression quality of structured HPC scientific data in terms of rate-distortion, which is the first attempt to our knowledge. The lossy compression quality is assessed mainly by rate-distortion in the scientific data compression community \cite{sz17,zfp,zchecker}. However, designing an effective method that can select the best-fit compressor based on \textbf{rate-distortion} is challenging because rate-distortion is not a single metric for a given data set; rather it involves a series of compression cases with different data distortions and compression ratios. Hence, selecting the best-fit compressor based on rate-distortion by simply running SZ and ZFP once based on sampled data points is impossible. Unlike Lu et al.'s work \cite{ipdps18}, which selects the best compressor based on a specific error bound and sampled data points, we have to model accurately the principles of the two state-of-the-art lossy compressors both theoretically and empirically. This is nontrivial because of the diverse data features and multiple complicated compression stages involved in the two compressors. Moreover, we must assure that our estimation algorithm has little computation cost, in order to keep a high overall execution performance for the in situ compression.

The contributions of this paper are as follows.
\begin{itemize}
\item We conduct an in-depth analysis of the existing lossy compression techniques and divide the procedure of lossy compression into three stages: lossless transformation for energy compaction; lossy compression for data reduction; and lossless entropy encoding, which is a fundamental step for the compression-quality estimation.
\item We explore a series of efficient strategies to predict the compression quality (such as compression ratio and data distortion) for the two leading lossy compressors (i.e., SZ and ZFP) accurately. Specifically, we derive some formulas and approaches for accurate prediction of PSNR and the number of bits to represent a data value on average (i.e., bit-rate) based on in-depth analysis of their compression principles with the three compression stages.
\item Based on our compression-quality estimation, we develop a novel online method to select the best-fit compressor between SZ and ZFP for each data set, leading to the best lossy compression results. We adopt \emph{rate-distortion} as the selection criterion because it involves both compression ratio and data distortion and it has been broadly used to assess compression quality in many domains \cite{baker, zfp, sz17}.
\item We evaluate the performance and compression quality of our proposed solution on a parallel system with 1,024 cores. Experiments on structured data sets from real-world HPC simulations show that our solution can significantly improve the compression ratio with the same level of data distortion and comparable performance. The compression ratio can be improved by up to 70\% because of a very high accuracy (around 99\%) in selecting the best compressor in our method. With our solution, the overall performance in loading and storing data can be improved by 79\% and 68\% on 1,024 cores, respectively, compared with the second-best evaluated approach.
\end{itemize}

The remainder of this paper is organized as follows. In Section \ref{sec:related-work}, we discuss the related work in scientific data compression. In Section \ref{sec:arch}, we introduce the overall architecture of our proposed automatic online selection method. In Section \ref{sec:analysis}, we analyze the three critical stages in detail based on existing lossy compressors. In Section \ref{sec:estimation}, we discuss how to predict the compression quality for SZ and ZFP accurately. In Section \ref{sec:exper-eval}, we present and analyze the experimental results. In Section \ref{sec:conclusion}, we provide concluding remarks and a brief discussion of future work.

\vspace{-1mm}

\section{Related Work}
\label{sec:related-work}

The issue of scientific data compression has been studied for years. The data compressors can be split into two categories: lossless compressor and lossy compressor. 

Lossless compressors make sure that reconstructed data set after decompression is exactly the same as the original data set. Such a constraint significantly limits their compression ratio on scientific data, for whatever generic byte-stream compressors (such as Gzip \cite{gzip} and bzip2 \cite{bzip2}) or whatever floating-point data compressors (such as FPC \cite{fpc} and FPZIP \cite{fpzip}). The reason is that scientific data is composed mainly of floating-point values and their tailing mantissa bits could be too random to compress effectively \cite{gomez2013improving}. 

Lossy compression techniques for scientific data sets generated by HPC applications also have been studied for years, and the existing state-of-the-art compressors include SZ \cite{sz16,sz17,sz18}, ZFP \cite{zfp}, ISABELA \cite{isabela}, FPZIP \cite{fpzip}, SSEM \cite{ssem}, VAPOR \cite{vapor}, and NUMARCK \cite{numarck}. Basically, their compression models can be summarized into two categories: prediction-based model \cite{sz17,isabela,fpzip,numarck,drbsd1} and transform-based model \cite{zfp,ssem}. A prediction-based compressor needs to predict data values for each data point and encodes the difference between every predicted value and its corresponding real value based on a quantization method. Typical examples are SZ \cite{sz16,sz17,sz18}, ISABELA \cite{isabela}, and FPZIP \cite{fpzip}. The block transform-based compressor transforms original data to another space where the majority of the generated data are very small (close to zero), such that they can be stored with a certain loss in terms of the user's required error bound. For instance, JPEG \cite{jpeg}, SSEM \cite{ssem} and VAPOR \cite{vapor}, and ZFP \cite{zfp} adopt discrete cosine transform, discrete wavelet transform and a customized orthogonal transform, respectively.

Recently, many research studies \cite{ipdps18,huebl2017scalability,sz16,sz17,sz18,zfp,zfp-online,peter-distribution,liang2018efficient,gok2018pastri,tao2017depth} have showed that SZ and ZFP are two leading lossy compressors for HPC scientific data. Specifically, SZ predicts each data point's value by its preceding neighbors in the multidimensional space and then performs an error-controlled quantization and customized Huffman coding to shrink the data size significantly. ZFP splits the whole data set into many small blocks with an edge size of 4 along each dimension and compresses the data in each block separately by a series of carefully designed steps, including alignment of exponent, orthogonal transform, fixed-point integer conversion, and bit-plane-based embedded coding. For more details, we refer readers to \cite{sz17} and \cite{zfp} for SZ and ZFP, respectively. Fu et al. \cite{otf} proposed an on-the-fly lossy compression method for a high-performance earthquake simulation. Their lossy compression can reduce the memory cost  by 50\% and improve the overall performance 24\% on Sunway TaihuLight supercomputer\mbox{\cite{sunway}}. This on-the-fly lossy compression scheme reduces 32-bit floating-point data to 16-bit by using an adaptive binary representation. We exclude this approach in our work because it is limited compression ratio of 2.

How to integrate different compression techniques into one framework and use their distinct advantages to optimize the compression quality is a challenging topic. Blosc \cite{blosc} is a successful lossless compressor based on multiple different lossless compression methods, including FastLZ \cite{fastlz}, LZ4/LZ4HC \cite{lz4}, Snappy \cite{snappy}, Zlib \cite{zlib}, and Zstandard \cite{zstd}. However, no such a compressor has been designed based on different lossy compression techniques for optimizing the rate-distortion, leading to a huge gap for the demand of efficient error-controlled compression on scientific data sets. Although Lu et al. \cite{ipdps18} model the compression performance of SZ and ZFP to select the best compressor in-between, they focus only on the compression ratio given a specific maximum error. Their method does not model and select the best compressor based on statistical metrics such as RMSE (root mean square error) or PSNR (a simple derivation from RMSE). Such average error metrics are more important for visualization of scientific data than is the maximum error \cite{baker,woodring2011revisiting}.
Although there are some more complex metrics (such as Structural Similarity Index) that can also evaluate compression schemes in terms of visual quality, for generality and simplicity \cite{hore2010image}, our research focuses on \textit{maximizing PSNR} for a given compression ratio, by modeling and selecting online the best compressor between SZ and ZFP with low performance overhead. 

\section{Architecture of Proposed Online Automatic Selection Method for Lossy Compression}
\label{sec:arch}
Lossy compression can be divided into three stages as shown in Figure \ref{fig:f8}.
In Stage I, the original data is transformed to another data domain (e.g., frequency domain) by lossless transformations. 
Here lossless transformation means the reconstructed data will be lossless if one transforms the original data and does the corresponding inverse-transformation right away. 
The transformed data is easier to compress because of the efficient energy compaction \cite{malvar1989lot}. Energy compaction means that the energy is more concentrated in some elements of the transformed data compared to the distribution of energy in the original data.
Stage II reduces the data size but also introduces errors. The most commonly used techniques for Stage II are vector quantization \cite {vq} (static quantization) and embedded coding \cite{ec} (dynamic quantization).
Stage III performs entropy coding for further lossless data reduction, and it is sometimes optional. Figure \ref{fig:f8} shows four state-of-the-art lossy compressors for HPC scientific data and their corresponding techniques in each stage.

\begin{figure}[ht]
\centering
\includegraphics[scale=0.34]{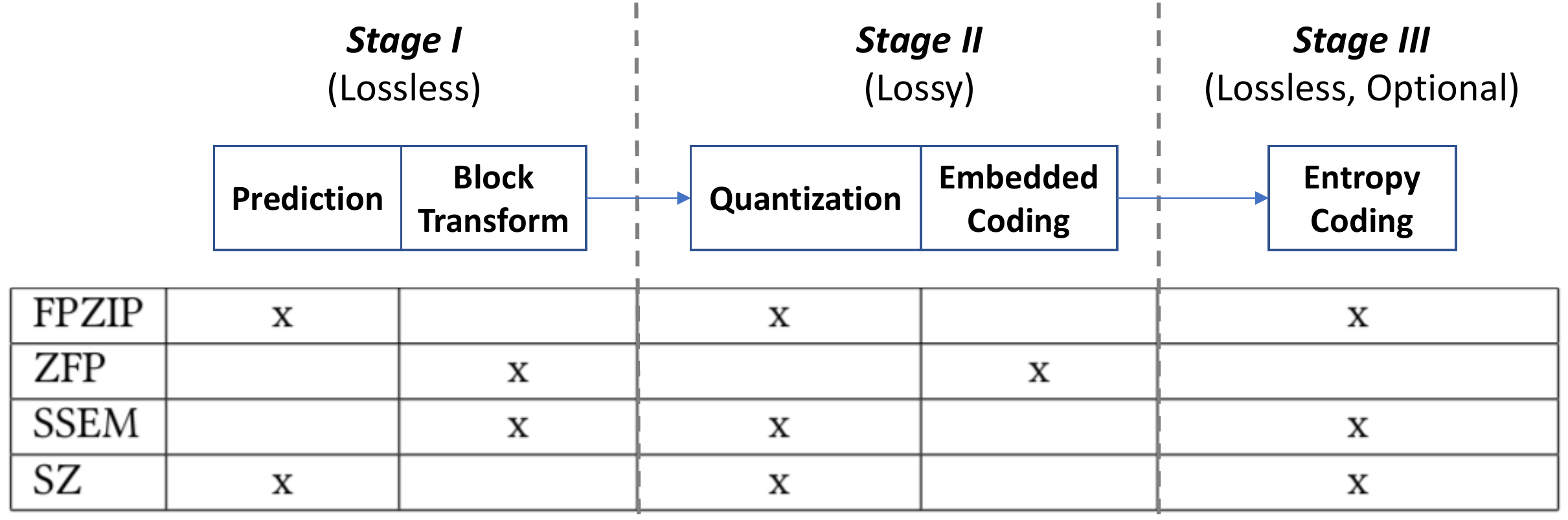}
\caption{Three stages in lossy compression for HPC scientific data.}
\label{fig:f8}
\end{figure}

We design our online selection method based on the analysis of the three critical compression stages.
Specifically, we propose a novel optimization strategy comprising four steps as shown in Figure \ref{fig:workflow}. The first step takes the input scientific data sets and performs Stage I's transformation on the sampled data points. The second step uses the transformed data points from Step 1 to estimate the compression quality (including compression ratio and distortion of data) based on our proposed estimation model. The third step selects the best-fit lossy compression strategy based on SZ and ZFP. The fourth step constructs a lossy compressor and uses it for compressing the data set.
As confirmed by recent research \cite{ipdps18,sz17,zfp,zfp-online,drbsd1, drbsd2,huebl2017scalability}, SZ and ZFP are two leading lossy compressors for HPC scientific data and can well represent prediction-based and transformation-based lossy compressors, respectively. Accordingly, our online selection method mainly is based on these two state-of-the-art lossy compressors without loss of generality.

In this paper, we adopt a practical \textit{in situ model} \cite{insitu} as many state-of-the-art lossy compressors (such as ISABELA, SZ, and ZFP) use as well. Here in situ means data analysis, visualization, and compression happen without first writing data to persistent storage. Thus, in the in situ compression model, the compression will be conducted after the entire computation in each simulation time step such that the entire analysis data is already kept in memory. Overall, our online method can select the best-fit compressor on-the-fly during the in situ compression, aiming at reducing data storage and I/O time overheads.

\begin{figure}
\centering
\includegraphics[scale=0.38]{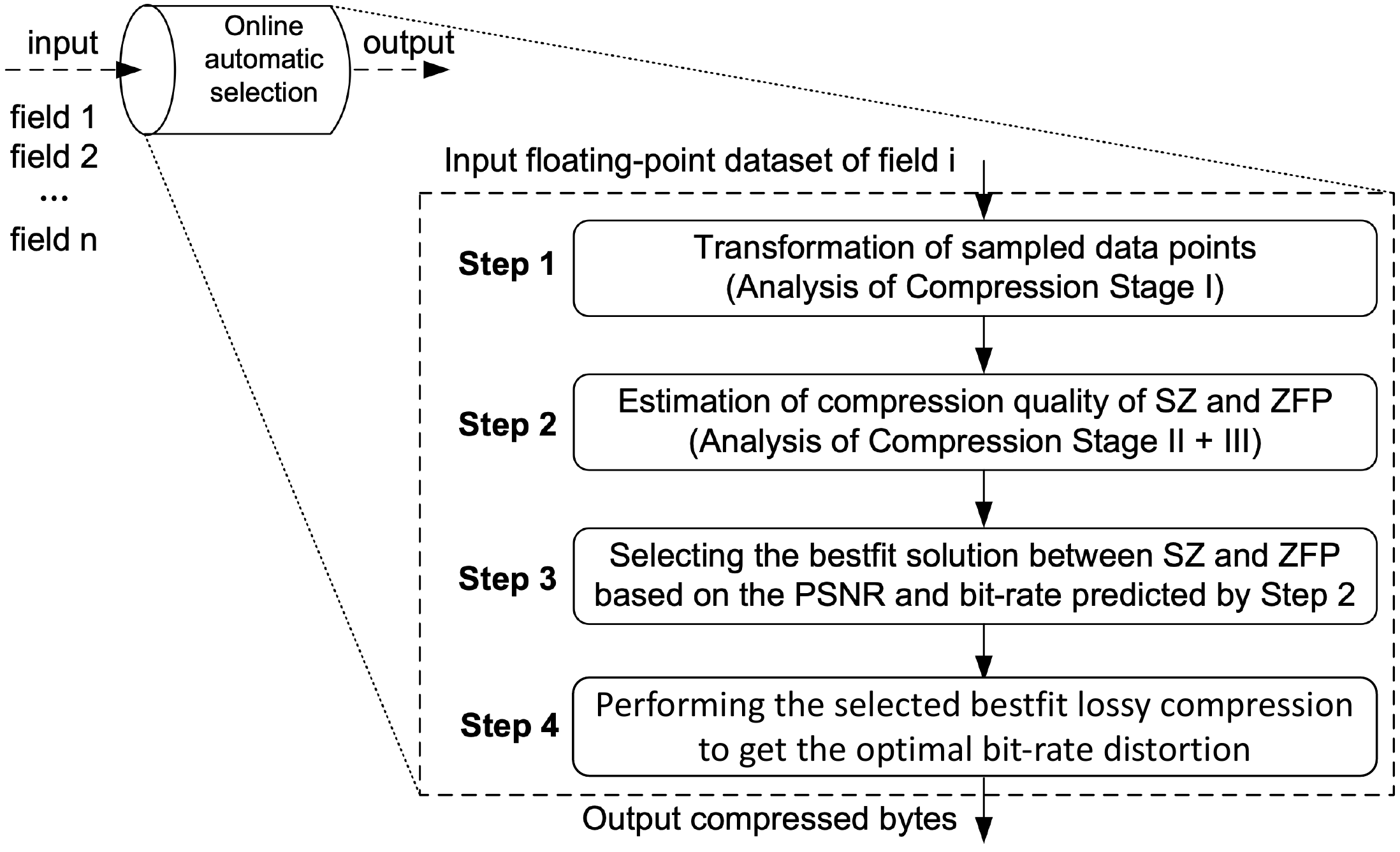}
\caption{Workflow of proposed online, low-overhead selection method for lossy compression of HPC scientific data.}
\label{fig:workflow}
\vspace{-4mm}
\end{figure}
\section{Analysis of Lossless Transformations for Energy Compaction in Stage I}
\label{sec:analysis}

In this section, we provide an in-depth analysis of the impact of Stage I (i.e., \emph{prediction-based transformation} (PBT) and \emph{block orthogonal transformation} (BOT)) on the overall distortion of data.
As presented in Figure \ref{fig:f8}, Stage I is lossless.
However, this does not mean that if the data in the transformed domain is changed, the overall distortion level (such as mean squared error (MSE)) of the finally reconstructed data can stay the same as that of the transformed data. The reason is that the data in the new transformed domain will be largely different from the original data. 
Based on an in-depth analysis of the two transformation methods in Stage I, we prove that the $L^2$-norm-based error value (e.g, MSE) keeps unchanged after the inverse transformation of PBT and BOT.
This fundamental analysis implies that we can predict the overall distortion of the finally decompressed data for SZ and ZFP by estimating the data distortion in Stage II.

\subsection{Prediction-Based Transformation (PBT)}
\label{sec:pbt}
In this subsection, we introduce prediction-based lossy compression, and then we infer that the pointwise compression error (i.e., the difference between any original data value and its decompressed value) is equal to the error introduced by vector quantization or embedded encoding in Stage II.

\begin{figure}
\centering
\includegraphics[scale=0.44]{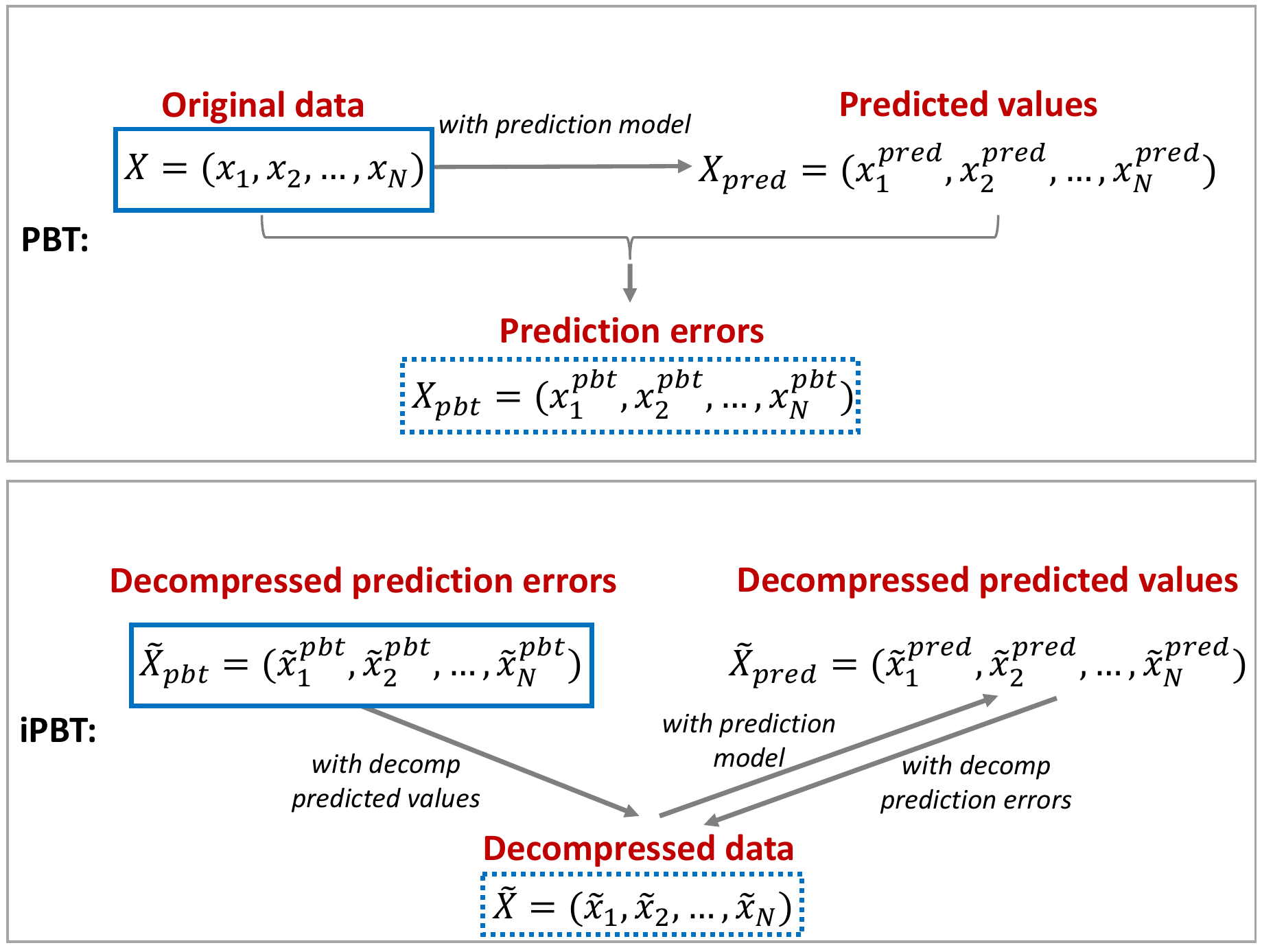}
\caption{Prediction-based transformation (PBT) and inverse prediction-based transformation (iPBT).}
\label{fig:f4}
\vspace{-6mm}
\end{figure}

In the compression phase of the prediction-based lossy compression (as shown in the top subfigure of Figure \ref{fig:f4}), the first step is to predict the value of each data point and calculate the prediction errors. We define \textit{PBT} as the process of generating a set of prediction errors (denoted by $X_{pbt}$) based on the original data set (denoted by $X$), data point by data point during the compression.
The prediction error will be quantized or encoded in Stage II.

During the decompression, one needs to reconstruct the prediction errors based on quantization method or embedded encoding and then reconstruct the overall data set by an inverse PBT (as presented in the bottom subfigure of Figure \ref{fig:f4}). We define the \textit{inverse PBT} (denoted \textit{iPBT}) as the procedure of constructing the decompressed data set (denoted $\tilde{X}$), data point by data point, based on the reconstructed prediction errors (denoted $\tilde{X}_{pbt}$) during the decompression.

In what follows, we infer that the following equation must hold for PBT.
\begin{equation}
\label{eq:x1}
X - \tilde{X} = X_{pbt} - \tilde{X}_{pbt}
\end{equation}

During the compression, the prediction method generally predicts the value of each data point based on the data points nearby in space because of the potential high consecutiveness of the data set. The Lorenzo predictor \cite{lorenzo}, for example, approximates each data point by the values of its preceding adjacent data points. \footnote{Lorenzo predictor uses 1 neighbor per data point for 1D data, 3 neighbors per data point for 2D data, and 7 neighbors per data point for 3D data.} Since the neighboring data values to be used to reconstruct each data point during the decompression are actually the decompressed values instead of the original values, in practice, one has to assure that the compression and decompression stage have exactly the same prediction procedure (including the data values used in the prediction method); 
otherwise, the data loss will be propagated during the decompression. Hence, the predicted values during the compression must be equal to the predicted values during the decompression. That is, we have $X_{pred} = \tilde{X}_{pred}$. Then, we can derive Equation (\ref{eq:x1}) based on the following two equations: $X_{pbt}=X-X_{pred}$ and $\tilde{X}=\tilde{X}_{pbt} + \tilde{X}_{pred}$.

Based on Equation (\ref{eq:x1}), we can easily derive the following theorem.
\begin{theorem}
\label{thm:2}
The pointwise compression error in the original data space is the same as the pointwise compression error in the PBT-transformed data space.
\end{theorem}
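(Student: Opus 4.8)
The plan is to show that the two quantities named in the theorem are literally the two sides of Equation (\ref{eq:x1}). First I would fix notation precisely: the \emph{pointwise compression error in the original data space} is the difference $X-\tilde{X}$ between each original value and its decompressed counterpart, while the \emph{pointwise compression error in the PBT-transformed space} is $X_{pbt}-\tilde{X}_{pbt}$, the difference between each prediction error and its reconstructed version in Stage~II. With these identifications, the claim ``the two errors are the same'' is exactly the assertion $X-\tilde{X}=X_{pbt}-\tilde{X}_{pbt}$, so the theorem reduces to Equation (\ref{eq:x1}).

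Next I would assemble Equation (\ref{eq:x1}) from the three structural facts already laid out in the text. By definition of PBT, the prediction error is $X_{pbt}=X-X_{pred}$; by definition of iPBT, the decompressed data is $\tilde{X}=\tilde{X}_{pbt}+\tilde{X}_{pred}$. Subtracting these and substituting the prediction invariant $X_{pred}=\tilde{X}_{pred}$ yields
\begin{equation*}
X-\tilde{X}=\bigl(X_{pbt}+X_{pred}\bigr)-\bigl(\tilde{X}_{pbt}+\tilde{X}_{pred}\bigr)=X_{pbt}-\tilde{X}_{pbt},
\end{equation*}
which is the desired pointwise equality. The computation is a one-line cancellation once the invariant is in hand.

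The genuine content, and the step I expect to be the only real obstacle, is justifying the invariant $X_{pred}=\tilde{X}_{pred}$ rather than the arithmetic. I would argue it by appealing to the causal structure of a prediction-based scheme: since the predictor (e.g., the Lorenzo predictor) forms each predicted value from \emph{already reconstructed} neighboring values, the decompressor has access to exactly the same inputs the compressor used, so it reproduces the identical predicted value at every data point. The subtle point to emphasize is that this holds pointwise and inductively across the traversal order, so no loss accumulates or propagates; otherwise the equality would fail. Once this invariant is established, the theorem follows immediately from Equation (\ref{eq:x1}), and I would close by noting that this pointwise identity is what later licenses estimating the overall $L^2$-based distortion from the Stage~II error alone.
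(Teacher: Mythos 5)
Your proposal is correct and follows essentially the same route as the paper: it identifies the theorem with Equation (\ref{eq:x1}), derives that equation from $X_{pbt}=X-X_{pred}$ and $\tilde{X}=\tilde{X}_{pbt}+\tilde{X}_{pred}$, and justifies the key invariant $X_{pred}=\tilde{X}_{pred}$ by the same argument that the decompressor predicts from the same reconstructed neighbors as the compressor. Your explicit emphasis on the inductive, pointwise nature of the invariant is a slightly more careful articulation of what the paper states informally, but the substance is identical.
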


\subsection{Block Orthogonal Transformation (BOT)}
\label{sec:bot}
In the following discussion, we first introduce the principle of the block orthogonal transformation. We then prove a critical feature: \emph{the $L_2$-norm based compression error (such as MSE) in the original data space is the same as the compression error in the BOT transformed data space}.

Let us first describe the elementwise tensor (matrix) norms that we will use in the following discussion. One can treat a tensor as a vector and calculate its elementwise norm based on a specific vector norm. For example, by using vector $p$-norm, we can define the elementwise $L_{p}$ norm of a tensor $X$ as follow. 
\begin{align}
\resizebox{.58\hsize}{!}{$
\|X\|_p = \|vec(X)\|_p = (\sum\limits_{x \in X} x^p)^{1/p}$} \label{eq:normp}
\end{align}

Further, if $X$ is an $M \times M$ matrix and we choose $p = 2$, Equation (\ref{eq:normp}) can be simplified to
\begin{align}
\resizebox{.75\hsize}{!}{$\|X\|_2 = (\sum\limits_{i=1}^M \sum\limits_{j=1}^M x_{ij}^2)^{1/2} = \sqrt{trace(X^t \cdot X)}$}, \label{eq:norm2}
\end{align}
where $trace()$ returns the sum of diagonal entries of a square matrix. Equation (\ref{eq:norm2}) defines the elementwise $L_2$ norm (a.k.a. Frobenius norm) of a square matrix.

Block transformation-based lossy compressors divide the entire data set into multiple data blocks and perform blockwise transformation at Stage I. Unlike prediction-based transformation, each block transformation has no dependency and can be performed independently.
Each block transformation is composed of several 1D linear transformations that can be performed along each axis within the block.
For example, in a 2D data array, 1D linear transformation is applied to each row ($x$-axis) and each column ($y$-axis).
Each 1D linear transformation can be calculated as a multiplication of the transformation matrix and 1D vector.

Many lossy compressors adopt orthogonal matrices in their transformations.
For example, SSEM uses the Haar wavelet transform and ZFP uses a self-optimized orthogonal matrix.
Here an orthogonal matrix $T$ means its columns and rows are orthogonal unit vectors, i.e., $T \cdot T^t = I$, where $I$ is the identity matrix.
The most significant advantage of using orthogonal transformation 
is the property of \textit{$L_2$-norm invariance} after transformation, that is,
\begin{align}
\|T \cdot X\|_2 & = \sqrt{trace((T \cdot X)^t \cdot (T 
\cdot X))} \nonumber \\
& = \sqrt{trace(X^t \cdot T^t \cdot T \cdot X)} \nonumber \\
& = \sqrt{trace(X^t \cdot X)} = \|X\|_2. \label{eq:invariant}
\end{align}
Based on this property, we can prove that the $L_{2}$-norm-based compression error in the original data space is the same as the compression error in the BOT-transformed data space. We will prove it later in this subsection.

The block size in the BOT-based lossy compressor is usually set to the power of 2.
ZFP and SSEM, for example, set the block size to $4^n$, where $n$ is the dimension size ($n = 1, 2, 3$). JPEG uses $8\times8$ as the block size in 2D image data.
In our work, without loss of generality, we consider the block size in BOT to be $4^n$ and do not limit the dimension $n$ of the data set. Note that here the ``dimension'' represents the dimensionality of each data point rather than the number of fields in the data sets.

Based on prior research \cite{zfp}, the transformation matrix of most existing well-known BOTs can be expressed as a uniform parametric form as
\begin{equation*}
\small
T = \frac{1}{2}
\begin{pmatrix}
1 & 1 & 1 & 1 \\
c & s & -s & -c \\
1 & -1 & -1 & 1 \\
s & -c & c & -s
\end{pmatrix}
\end{equation*}
\begin{equation*}
s = \sqrt{2}\sin \frac{\pi}{2}t
\quad
c = \sqrt{2}\cos \frac{\pi}{2}t,
\end{equation*}
where $t \in [0,1]$ is a parameter.
Specifically, $t = 0$ and $t = \frac{1}{4}$ corresponds to discrete HWT and DCT II, respectively, which are two most common transforms.
Moreover, $t = \{ \frac{2}{\pi}\tan^{-1} \frac{1}{3}, \frac{2}{\pi}\tan^{-1}\frac{1}{2}, \frac{1}{2}\}$ represents slant transform, high-correlation transform, and Walsh-Hadamard transform. 

In what follows, we discuss the unified formulas of BOT for any dimensional data.
We use $T_{bot}$ to denote the BOT and $X$ to denote the data block. 
$X$ can be represented by a $4^n$ tensor $(x_{i_1 \cdots i_n})_{4 \times \cdots \times 4}$ where $1 \leq  i_1, \cdots, i_n \leq 4$.
Since the orthogonal transformation is performed on the 1D $4 \times 1$ vector, we need to rearrange $X$'s elements to form an $4\times 4^{n-1}$ matrix and do a matrix-matrix multiplication.
The $n$ dimensional tensor $X$ can be unfolded along the $n$ directions by index mapping.
We use $D_1$-axis, $D_2$-axis, $\cdots$, $D_n$-axis to denote the $n$ directions.
Specifically, the unfolding along the $k$-th direction $D_k$-axis ($1 \leq k \leq n$) will map the tensor element $x_{i_1 \cdots i_n}$ to the matrix element $(i_k, j)$, where $j = \sum_{l=1}^{k-1} 4^{l-1} (i_l-1) + \sum_{l=k+1}^n 4^{l-2} (i_l-1)$. We use $unfold_{D_1}()$, $unfold_{D_2}()$, $\cdots$, $unfold_{D_n}()$ to denote the unfolding operations along the $D_1$-axis, $D_2$-axis, $\cdots$, $D_n$-axis, respectively.
Accordingly, we can fold the tensor from the unfolded matrix by the inverse index mapping, denoted by $fold_{D_1}()$, $fold_{D_2}()$, $\cdots$, $fold_{D_n}()$.
Thus, $T_ {bot}$ can be expressed as the following $n$ operations.

\begin{itemize}
\item[$1.$] $X = fold_{D_1} ( T \cdot unfold_{D_1}(X) )$
\item[$2.$] $X = fold_{D_2} ( T \cdot unfold_{D_1}(X) )$
\item[$\vdotswithin{n.}$] \leavevmode
\item[$n.$] $X = fold_{D_n} ( T \cdot unfold_{D_n}(X) )$ 
\end{itemize} 

Next, we propose Lemma \ref{lemma:1} and Theorem \ref{thm:1} and prove them.

\begin{lemma}
\label{lemma:1}
Block orthogonal transformation (BOT) preserves the $L_{2}$ norm on any dimenstional data sets.
\end{lemma}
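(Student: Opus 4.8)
The plan is to reduce the multi-step transformation $T_{bot}$ to $n$ successive applications of the single-matrix $L_2$-norm invariance already established in Equation (\ref{eq:invariant}), and then chain them together. Each of the $n$ defining operations has the form $X \mapsto fold_{D_k}(T \cdot unfold_{D_k}(X))$, so it suffices to show that one such operation preserves $\|X\|_2$ and that this property composes.

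Two ingredients are needed. First, I would argue that unfolding and folding are themselves norm-preserving. Since $unfold_{D_k}$ and $fold_{D_k}$ are defined purely by an index bijection between tensor entries and matrix entries, neither changes the multiset of numerical values. Because the elementwise $L_2$ (Frobenius) norm of Equations (\ref{eq:normp}) and (\ref{eq:norm2}) depends only on that multiset, we have $\|unfold_{D_k}(X)\|_2 = \|X\|_2$ and $\|fold_{D_k}(Y)\|_2 = \|Y\|_2$ for the matrix $Y = T \cdot unfold_{D_k}(X)$.

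Second, I would apply the orthogonal invariance to the matrix $unfold_{D_k}(X)$, which is $4 \times 4^{n-1}$ and hence not square when $n > 1$, whereas Equation (\ref{eq:invariant}) was written for a square $X$. The trace computation there, however, uses only $T^t T = I$ together with the identity $\|Y\|_2^2 = trace(Y^t Y)$, both of which hold verbatim for any $4 \times 4^{n-1}$ matrix $Y$; hence $\|T \cdot unfold_{D_k}(X)\|_2 = \|unfold_{D_k}(X)\|_2$. Verifying that this generalization from square to rectangular matrices is legitimate is the one step requiring genuine care, and it is where I expect the only real subtlety to lie.

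Combining the two ingredients gives, for each $k$,
\[
\|fold_{D_k}(T \cdot unfold_{D_k}(X))\|_2 = \|T \cdot unfold_{D_k}(X)\|_2 = \|unfold_{D_k}(X)\|_2 = \|X\|_2,
\]
so every one of the $n$ steps leaves the elementwise $L_2$ norm unchanged. Applying this along the chain of operations $1, 2, \dots, n$ (formally by induction on the number of steps performed) yields $\|T_{bot}(X)\|_2 = \|X\|_2$, which is precisely the assertion of the lemma.
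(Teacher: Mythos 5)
Your proof follows essentially the same route as the paper's: folding and unfolding are index bijections that leave the elementwise $L_2$ norm unchanged, the orthogonal invariance of Equation (\ref{eq:invariant}) handles the multiplication by $T$, and the $n$ operations chain together. The one point you add---explicitly checking that the trace identity behind Equation (\ref{eq:invariant}) still applies to the rectangular $4 \times 4^{n-1}$ matrix $unfold_{D_k}(X)$, since it only uses $T^t \cdot T = I$---is a detail the paper's proof applies silently, and your verification of it is correct.
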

\begin{proof}
We still denote the orthogonal transformation matrix by $T$.
Because the $unfold_{D_k}()$ and $fold_{D_k}()$ are both index mapping operations, the values and elementwise norm will remain unchanged. Thus, we can write 
\begin{align*}
\|fold_{D_k} (T \cdot unfold_{D_k}(X) )\|_{2} = \|T \cdot unfold_{D_k}(X)\|_{2}.
\end{align*}
Then, based on Equation (\ref{eq:invariant}), we can get
\begin{align*}
\|T \cdot unfold_{D_k}(X)\|_{2} = \|unfold_{D_k}(X)\|_{2} = \|X\|_{2}.
\end{align*}
So $\|fold_{D_k} (T \cdot unfold_{D_k}(X) )\|_{2} = \|X\|_{2}$ is held for the $k$-th operation ($1 \leq k \leq n$), which demonstrates that every operation in the BOT can keep $\|X\|_2$ unchanged. Therefore, we have proved this theorem.
\end{proof}

We still use $\tilde{X}$ to denote the decompressed block data, $X_{bot}$ to denote the transformed block data in the compression, and $\tilde{X}_{bot}$ to denote the decompressed transformed block data in the decompression.
We have 
\begin{equation*}
X_{bot} = T_{bot} (X)
\end{equation*}
and 
\begin{equation*}
\tilde{X}_{bot}= T_{bot} (\tilde{X}).
\end{equation*}
Thus, due to the linearity of $T_{bot}$, we have
\begin{equation*}
X_{bot}  - \tilde{X}_{bot} = T_{bot}(X) - T_{bot}(\tilde{X}) = T_{bot}(X - \tilde{X}).
\end{equation*}
Based on Lemma \ref{lemma:1}, we have 
\begin{equation*}
\|X_{bot}  - \tilde{X}_{bot}\|_{2} = \|T_{bot}(X - \tilde{X})\|_{2} = \|X - \tilde{X}\|_{2}.
\end{equation*}
This equation also holds when $X$ is composed of multiple data blocks.
That is, we already prove the following theorem.

\begin{theorem}
\label{thm:1}
The $L_{2}$-norm-based compression error in the original data space is the same as the compression error in the BOT-transformed data space on any dimenstional data sets.
\end{theorem}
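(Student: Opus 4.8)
The plan is to reduce the claim to a single data block and then aggregate over all blocks. First I would fix one block $X$ together with its decompressed counterpart $\tilde{X}$, and exploit the fact that $T_{bot}$ is a linear operator, being the composition of the matrix multiplications described in operations $1$ through $n$. Writing the compression error in the transformed domain as $X_{bot} - \tilde{X}_{bot} = T_{bot}(X) - T_{bot}(\tilde{X})$, linearity lets me pull the subtraction inside the transform to obtain $X_{bot} - \tilde{X}_{bot} = T_{bot}(X - \tilde{X})$.

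The next step is to treat the error tensor $E := X - \tilde{X}$ as an ordinary $4^n$ tensor and apply Lemma \ref{lemma:1} directly to it. Since Lemma \ref{lemma:1} asserts $\|T_{bot}(E)\|_2 = \|E\|_2$ for any tensor $E$ of the appropriate shape, I immediately get $\|X_{bot} - \tilde{X}_{bot}\|_2 = \|T_{bot}(E)\|_2 = \|E\|_2 = \|X - \tilde{X}\|_2$. This settles the single-block version with essentially no new computation: it is a corollary of the linearity of $T_{bot}$ combined with Lemma \ref{lemma:1}. The conceptual point worth flagging is that we apply the norm-invariance not to the data itself but to the \emph{difference} $X - \tilde X$, which is permissible precisely because $T_{bot}$ is linear.

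The remaining step, and the one that needs a little care, is the extension from one block to a full data set partitioned into many disjoint $4^n$ blocks. Here I would invoke the additivity of the squared elementwise $L_2$ norm: because the blocks tile the whole array without overlap, the global $\|\cdot\|_2^2$ is the sum of the per-block squared errors, as follows directly from the definitions in Equations (\ref{eq:normp}) and (\ref{eq:norm2}). Applying the single-block identity to each block and summing then yields that the global transformed-domain error equals the global original-domain error, and hence so does the MSE, since MSE is just $\|X - \tilde{X}\|_2^2$ divided by a fixed element count. The only genuine obstacle is justifying this aggregation, namely that BOT acts block-independently (as noted earlier, ``each block transformation has no dependency and can be performed independently''), so that the per-block errors neither interact nor overlap and the sum of squares is preserved exactly.
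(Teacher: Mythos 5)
Your proposal is correct and follows essentially the same route as the paper's own argument: linearity of $T_{bot}$ applied to the difference $X - \tilde{X}$, then Lemma \ref{lemma:1} for the single-block norm equality, then aggregation over disjoint blocks. The only difference is that you spell out the block-additivity of the squared $L_2$ norm explicitly, whereas the paper asserts the multi-block extension in one sentence; this is a welcome clarification but not a different proof.
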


Note that the reasons that Theorem \ref{thm:1} focuses on $L_2$ norm include two aspects: on one hand, the ``norm invariance'' property (as shown in Equation (\ref{eq:invariant})) of BOT only holds for $L_2$ norm in terms of the elementwise $L_p$ norms because of Equation \ref{eq:norm2}; on the other hand, $L_2$-norm-based error (such as MSE or PSNR) has been considered as one of the most critical indicators to assess the overall data distortion in literature, because it is closely related to the visual quality \cite{guthe2001real}, unlike maximum compression error (i.e., $L_{\inf}$-norm based error).

\subsection{Data Sampling for Compression-Quality Estimation}
In our proposed automatic online selection method, we first sample the data points (i.e., Step 1) and then perform the transformations on them (i.e., Step 2) in order to estimate the overall compression quality, as shown in Figure \ref{fig:workflow}. The distance between 
two data blocks sampled nearby will be fixed in the same dimension and different across dimensions, such that all sampled blocks can be distributed uniformly throughout the entire data set.
We use the term \textit{sampling rate} to represent the sampling frequency, which is denoted by $r_{sp}$.
In the evaluation section, we present the accuracy of our estimation model with respect to different sampling rates. Based on our experiments (to be shown later), a sampling rate of $5\%$ can provide a good accuracy with low performance overhead.
Therefore, we choose $5\%$ as the default sampling rate in our implementation. Note that for PBT, the prediction over the sampled data points is actually based on their \textit{original real neighbors} instead of their neighbors in the sampled points. Thus, the sampling process for PBT will not introduce additional errors.

\section{Compression Quality Estimation of Lossy Data Reduction in Stage II}
\label{sec:estimation}

In this section, we provide an in-depth analysis of the lossy data reduction in Stage II. We then propose a general estimation model to predict the compression quality (including compression ratio and compression error) accurately for lossy compressors with vector quantization or embedded coding in Stage II based on the theorems derived in Section \ref{sec:analysis}. After that, we apply our estimation model specifically to SZ and ZFP to predict their compression quality (i.e., Step 2 as shown in Figure \ref{fig:workflow}). We then discuss the implementation details of our proposed online automatic selection algorithm. 

For notation, we use $X^{(2)} = \{x_1^{(2)}, \cdots, x_N^{(2)}\}$ to denote the transformed data from the original data $X$. That is, $X^{(2)}$ are the input data of Stage II and the output data of Stage I.

\subsection{Estimation Based on Static Quantization}
Unlike data-dependent quantization approach (such as embedded coding that will be discussed later), static quantization is determined before performing quantization. Vector quantization \cite{vq} is one of the most popular static quantization methods. It converts $X^{(2)}$ (i.e., prediction errors in PBT or transformed data in BOT) to another set of integer values, which are easier to compress. 
Specifically, the value range is split into multiple intervals (i.e., quantization bins) based on some method, such as equal-size quantization or log-scale quantization (discussed later). Then, the compressor needs to go through all the transformed data ($X^{(2)}$) to determine in which bins they are located, and represents their values by the corresponding bin indexes, which are integer values.
During the decompression, the midpoint of each quantization bin will be used to reconstruct the data that are located in the bin; it is called the estimated value (or quantized value) in the following discussion.
The effectiveness of the data reduction in vector quantization depends on the distribution of the transformed data $X^{(2)}$.
Moreover, the quantization step introduces errors to $X^{(2)}$, and such errors will be added to the decompressed data.

\begin{figure}
\centering
\includegraphics[scale=0.46]{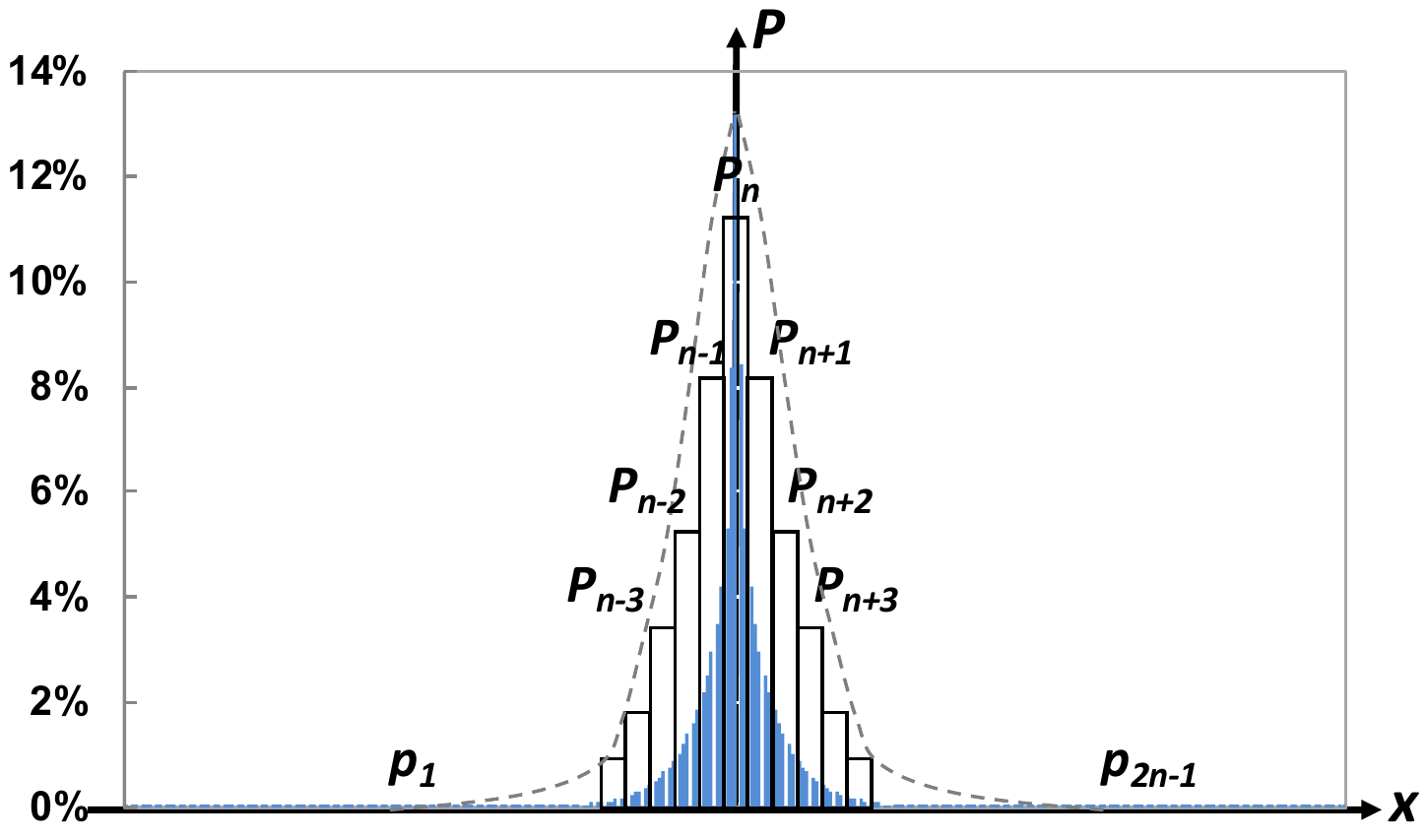}
\caption{Example of the distribution and vector quantization of the prediction errors generated by SZ lossy compressor on one ATM field.}
\label{fig:f1}
\vspace{-4mm}
\end{figure}

We build a model to estimate the data reduction level (e.g., compression ratio) and the data distortion level (e.g., mean squared error), based on a vector quantization method with a specific distribution of $X^{(2)}$.
In the following, we define some important notations to be used.
We denote $P(x)$ as the probability density function (PDF) of $X^{(2)}$, that is, $X^{(2)} \sim P(x)$.
Based on our observation, the probability distribution of $X^{(2)}$ is symmetric in a large majority of cases.
The blue area in Figure \ref{fig:f1} exemplifies the typical probability distribution of the prediction errors generated by the SZ lossy compressor using the ATM data set. All other tested data sets show the same symmetry.
Therefore, we assume $P(x)$ to be symmetric without loss of generality (i.e., $P(i)$ is equal to $P(2n-i)$ as illustrated in Figure \ref{fig:f1}), and the number of vector quantization bins is represented by $2n-1$.
We denote $\delta_i$ the length of the $i$th quantization bin, where $\delta_i = \delta_{2n-i}$ because of the symmetry property.

\subsubsection{Estimation of bit-rate}
Bit-rate is defined as the average number of bits used in the compressed data as per value. As discussed previously, a large number of transformed data values generated by the vector quantization are supposed to gather in a few quantization bins. That is, they are represented by a few integer bin indexes (Stage II), such that the data size can be reduced significantly by entropy encoding (Stage III). We combine our discussion for Stage II and Stage III, in order to estimate the overall reduction size achieved by the quantization.

Given a number of symbols, an entropy encoding method (such as Huffman coding \cite{huffman} and arithmetic coding \cite{arithmetic}) can assign a number of bits to represent these symbols based on their frequencies.
Since Shannon entropy theory \cite{shannon} gives the expected number of bits to represent these symbols, we can use the entropy value of the $2n-1$ bins to estimate the expected bit rate used to represent all quantized values.
The estimation equation is shown as follows:
\begin{equation}
\resizebox{.38\hsize}{!}{$BR = - \sum\limits_{i=1}^{2n-1} P_i \cdot \log_2 P_i$},
\label{eq:entropy}
\end{equation}
where $P_i$ is the probability of the $i$th quantization bin.

The probability of each bin can be calculated by the integral of its probability density function value.
Specifically, $P_i = \int_{s_i}^{s_{i+1}}P(x)dx$, where $[s_i, s_{i+1})$ is the $i$-th quantization bin and $s_{i+1} - s_i = \delta_i $.
Since the integral is relatively complex to compute, we use $\delta_i \cdot P(\frac{s_i+s_{i+1}}{2})$ to approximate $\int_{s_i}^{s_{i+1}}P(x)dx$. Therefore, the estimation of the bit rate based on the $X^{(2)}$'s PDF is
\begin{equation*}
\resizebox{0.72\hsize}{!}{$BR = - \sum\limits_{i=1}^{2n-1} \delta_i P(\frac{s_i+s_{i+1}}{2}) \cdot \log_2 (\delta_i P(\frac{s_i+s_{i+1}}{2}))$}.
\end{equation*}

To further simply the equation, let $m_i$ denote the midpoint of the $i$th bin, namely, $(s_i+s_{i+1})/2$.
Then we have
\begin{align}
&\resizebox{.58\hsize}{!}{$BR = - \sum\limits_{i=1}^{2n-1} \delta_i P(m_i) \cdot \log_2 (\delta_i P(m_i)$} \nonumber \\
& \resizebox{.82\hsize}{!}{$= -\sum\limits_{i=1}^{2n-1} P(m_i) \delta_i \log_2 \delta_i - \sum\limits_{i=1}^{2n-1} \delta_i P(m_i) \log_2 P(m_i))$}, \label{eq:bit-rate-pdf}
\end{align}
where $m_i = (s_i+s_{i+1})/2 = \sum_{j=1}^{i-1} \delta_j + \delta_i/2$.
Note that the midpoint of the $n$th bin is $0$ (i.e., $m_n = 0$) according to the symmetry property.

Therefore, we can estimate the bit rate value by Equation (\ref{eq:bit-rate-pdf}) given the probability density function of $X^{(2)}$.
(We discuss our method to estimate the $X^{(2)}$'s PDF in detail later.) Note that the compression ratio can be calculated by dividing the number of bits per floating-point value by the bit-rate, for example, 32/bit-rate for single-precision data and 64/bit-rate for double-precision data.

\subsubsection{Estimate of compression error}
As proved in Theorem \ref{thm:2} and \ref{thm:1}, the PBT and BOT are both $L^2$-norm-preserving transformations.
Thus, the $L^2$-norm-based error, such as the mean squared error (MSE), introduced by Stage II stays unchanged after decompression.
Therefore, we can estimate the $L^2$-norm based compression error by estimating the error of Stage II.

We denote $\hat{X}^{(2)}$ as the quantized values of $X^{(2)}$.
The MSE between $X^{(2)}$ and $\hat{X}^{(2)}$ can be calculated by
\begin{align}
& \resizebox{.55\hsize}{!}{$MSE(X^t,  \tilde{X^t}) = E_{X^t}[(X^t - \tilde{X^t})^2]$} \nonumber \\
& \resizebox{.42\hsize}{!}{$ = \int_{-\infty}^{+\infty} (x - \tilde{x})^2 \cdot P(x) dx$}, \label{eq:mse-pdf}
\end{align}
where $E[\cdot]$ represents the expectation.
Note that $\hat{x}$ is a step function, since the values in each bin are quantized to the same value.
Lossy compressors such as NUMARACK\cite{numarck}, SSEM\cite{ssem}, and SZ\cite{sz17} often use the midpoint of the quantization bin to approximate the values located in it.
Therefore,  $\hat{x} = \frac{s_i+s_{i+1}}{2} = m_i$ when $ s_i \leq x < s_{i+1}$.
We can further estimate the MSE based on the probability density function $P(x)$ and the step function $\hat{x}$ as follows:
\begin{align*}
& \resizebox{.58\hsize}{!}{
$MSE = \sum\limits_{i=1}^{2n-1} \int_{s_{i}}^{s_{i+1}} (x - \hat{x})^2 \cdot P(x) dx$} \\
& \resizebox{.55\hsize}{!}{$\approx  \sum\limits_{i=1}^{2n-1} (P(m_i) \cdot \int_{s_{i}}^{s_{i+1}} (x - m_i)^2 dx)$} \\
& \resizebox{.75\hsize}{!}{$ = \sum\limits_{i=1}^{2n-1} (P(m_i) \cdot \int_{0}^{\delta_i} (x - \frac{\delta_i}{2})^2 dx) = \frac{1}{12} \sum\limits_{i=1}^{2n-1} \delta_i^3 P(m_i)$}
\end{align*}

After that, we can calculate the root mean squared error (RMSE), normalized root mean squared error (NRMSE), and peak signal-to-noise ratio (PSNR) as follows:
\begin{align}
& \resizebox{.75\hsize}{!}{$NRMSE 	= \frac{\sqrt{MSE}}{VR} = (\sum\limits_{i=1}^{2n-1} \delta_i^3  P(m_i))^{\frac{1}{2}} / (2\sqrt{3} \cdot VR$)} \nonumber \\
& \resizebox{.5\hsize}{!}{$PSNR 	= -20 \cdot \log_{10} (NRMSE)$} \nonumber \\
& \resizebox{.85\hsize}{!}{$ = -10 \cdot (\log_{10} (\sum\limits_{i=1}^{2n-1} \delta_i^3  P(m_i)) - 2\cdot \log_{10} VR - \log_{10} 12)$},
\label{eq:psnr-pdf}
\end{align}
where $VR$ represents the value range of the original data $X^{(1)}$.
Thus far, we have established the estimation equations for bit rate and $L^2$-norm based compression error. We are now ready to derive the estimation of the most significant metric: rate-distortion.

\subsubsection{Estimation of rate-distortion}
Rate-distortion is an important metric to compare different lossy compressors, such as fixed rate lossy compressors (e.g., ZFP) and fixed accuracy lossy compressors (e.g., SZ).
For fair comparison, people usually plot the rate-distortion curve for the different lossy compressors and compare the distortion quality with the same rate. Generally, the higher the rate-distortion curve, the better the lossy compression quality.
Here the term ``rate'' means bit rate in bits/value, and ``distortion'' usually adopts PSNR.

Based on the estimation of bit rate and PSNR proposed above (i.e., Equations (\ref{eq:bit-rate-pdf}) and (\ref{eq:psnr-pdf})),
the rate-distortion depends only on $\delta_1, \delta_2, \cdots, \delta_{2n-1}$, given the probability distribution of $X^{(2)}$.
However, it is difficult to optimize the $2n-1$ values $\{\delta_i\}_{i = 1}^{2n-1}$ for the rate-distortion during the preparation stage, even if the probability distribution is classic distribution, such as Gaussian distribution.
In the following, we analyze three common, effective vector quantization approaches; the analysis can be extended by including more vector quantization methods.

\subsubsection{Detailed analysis of three vector quantization Cases}
\begin{itemize}[topsep=4pt]
\setlength\itemsep{0.5em}
\item \textbf{Linear quantization:} This is the simplest yet effective vector quantization approach, which is adopted by SZ lossy compressor.
Under this approach, all quantization bins have the same length, (i.e.,  $\delta_1  = \cdots = \delta_{2n-1} = \delta$).
On the other hand, the $2n-1$ quantization bin can cover all the prediction errors as long as the number of bins is large enough, hence, $\sum\limits_{i=1}^{2n-1} P(m_i)  \approx \frac{1}{\delta}$.
So, Equations (\ref{eq:bit-rate-pdf}) and (\ref{eq:psnr-pdf}) can be simplified as follows:
\begin{align}
& \resizebox{.67\hsize}{!}{$BR_{sz} 	= -\delta \sum\limits_{i=1}^{2n-1} P(m_i) \log_2 P(m_i) - \log_2 \delta, \label{eq:bitrate-sz}$} \\
& \resizebox{.67\hsize}{!}{$PSNR_{sz}  = 20\cdot \log_{10} (VR/\delta) + 10\cdot \log_{10} 12$} \label{eq:psnr-pdf-linear}.
\end{align}
Equation (\ref{eq:psnr-pdf-linear}) tells us that the PSNR depends only on the unified quantization bin size regardless of the distribution of transformed data from Stage I.
For example, the SZ lossy compressor sets the bin size $\delta$ to twice the absolute error bound (i.e., $eb_{abs}$) to make sure the maximum pointwise compression error within $eb_{abs}$.
So, based on Equation (\ref{eq:psnr-pdf-linear}), our PSNR estimation for SZ lossy compressor becomes
\begin{equation}
\label{eq:psnr-sz}
\resizebox{.75\hsize}{!}{$PSNR_{sz} = -20\cdot \log_{10} (eb_{abs}/VR) + 10\cdot \log_{10} 3$}.
\end{equation}
Note that $eb_{abs}/VR$ is the value-range-based relative error bound \cite{sz17} (denoted by $eb_{rel}$) defined by SZ. Unlike the pointwise relative error that is compared with each data value, value-range-based relative error is compared with value range of each data field. Thus our model can estimate the SZ's PSNR precisely based on the value-range-based relative error bound.

\item \textbf{Log-scale quantization:} Log-scale quantization is an alternative to the linear quantization, and its bin sizes follows a logarithm distribution.
Suppose one is using $2n-1$ bins to quantize $X^{(2)}$, in order to cover the maximum absolute value in $X^{(2)}$, $b$ is chosen to be $\lceil \log_{n} ( \max\limits_{i=1}^{N}\{|x^{(2)}_i|\}) \rceil$.
If $x^{(2)}_i < 0$, $x^{(2)}_i$ falls into the $n-\lfloor \log_b (-x^{(2)}_i) \rfloor$-th bin; if $x^{(2)}_i = 0$,  $x^{(2)}_i$ falls into the $n$th bin; if $x^{(2)}_i > 0$, $x^{(2)}_i$ falls into the $n+\lfloor \log_b x^{(2)}_i \rfloor$-th bin.
Thus, the log-scale quantization uses $\delta_{n-i} = b^i - b^{i-1}$, $\delta_n = 2b$, $\delta_{n+i} = b^i - b^{i-1}$ as the bin size where $1 \leq i \leq n-1$. 
Compared with linear quantization, log-scale quantization usually a has higher PSNR but a lower compression ratio.
The reason is that log-scale quantization assigns a larger number of finer bins to the high-frequency (central) regions.
Thus, according to Equation (\ref{eq:mse-pdf}), log-scale quantization's PSNR can be higher than linear quantization.
On the other hand, the distribution of the interval frequencies with log-scale quantization is more even than with linear quantization, leading to a poor entropy encoding result. Hence, for various data, it is hard to tell directly which quantization method is better in terms of rate-distortion.
The most effective way is to compare their rate-distortion estimations.

\item \textbf{Equal-probability quantization:} This vector quantization approach is employed by the NUMARCK lossy compressor.
This method generates equal probability for each quantization interval; hence, $\delta_i \cdot P(m_i) \approx \frac{1}{2n-1}$.
In this case, the estimation of bit rate equals $\lceil \log_2 (2n-1) \rceil = 1 + \log_2 n$. It shows that the entropy encoding has no effect on the $2n-1$ intervals with the same frequency.
The PSNR estimation will be $ - 10 \cdot \log_{10} (\sum\limits_{i=1}^{n} \delta_i^2 ) + 20 \cdot \log_{10} VR +10 \cdot \log_{10} (6n) $.
The bin size can be estimated by the clustering-based approximation approach (proposed by Chen et al. \cite{numarck}), such as the K-means cluster algorithm, whose time overhead is expensive.

\end{itemize}

\subsection{Estimation Based on Dynamic Quantization}
Dynamic quantization is a data-dependent manner and encodes the data progressively. For example, embedded coding (EC) \cite{ec} is the most commonly used dynamic quantization approach. It is the most important part of the BOT-based lossy compressors, such as JPEG2000 and ZFP. It generates a stream of bits that are put into order based on their impact on the error. Many variances of EC have been proposed in previous work \cite{lian2003analysis, fang2005parallel, chiang2004high}.
As we proved in Section \ref{sec:bot}, the $L^2$-norm based compression error in the original data space is equal to the compression error in the transformed space, so the bits in the same bit-plane (as shown as the blue dash line in Figure \ref{fig:f2} should be encoded at a time, that is the case for most of the EC variances.

Figure \ref{fig:f2} shows an example with 16 transformed (4$\times$4 data block) to be encoded. Each datum is represented by its binary format. EC starts from the leftmost bit-plane and ends at the maximum bit-plane, as shown as the purple dash line in the figure. The maximum bit-plane is determined by the bit budget or the error bound set by users. For each value, EC encodes only its significant bits (i.e., nonzero bits). We use a red dashed line to indicate the significant bits for the 16 values. The BOT-based transformed data is roughly in order (i.e., large values appear ahead of small values). We can observe that the red dashed line in Figure \ref{fig:f2} exhibits a staircase shape. We can use this feature to estimate the bit-rate and compression error for embedded coding.

\begin{figure}
\centering
\includegraphics[scale=0.32]{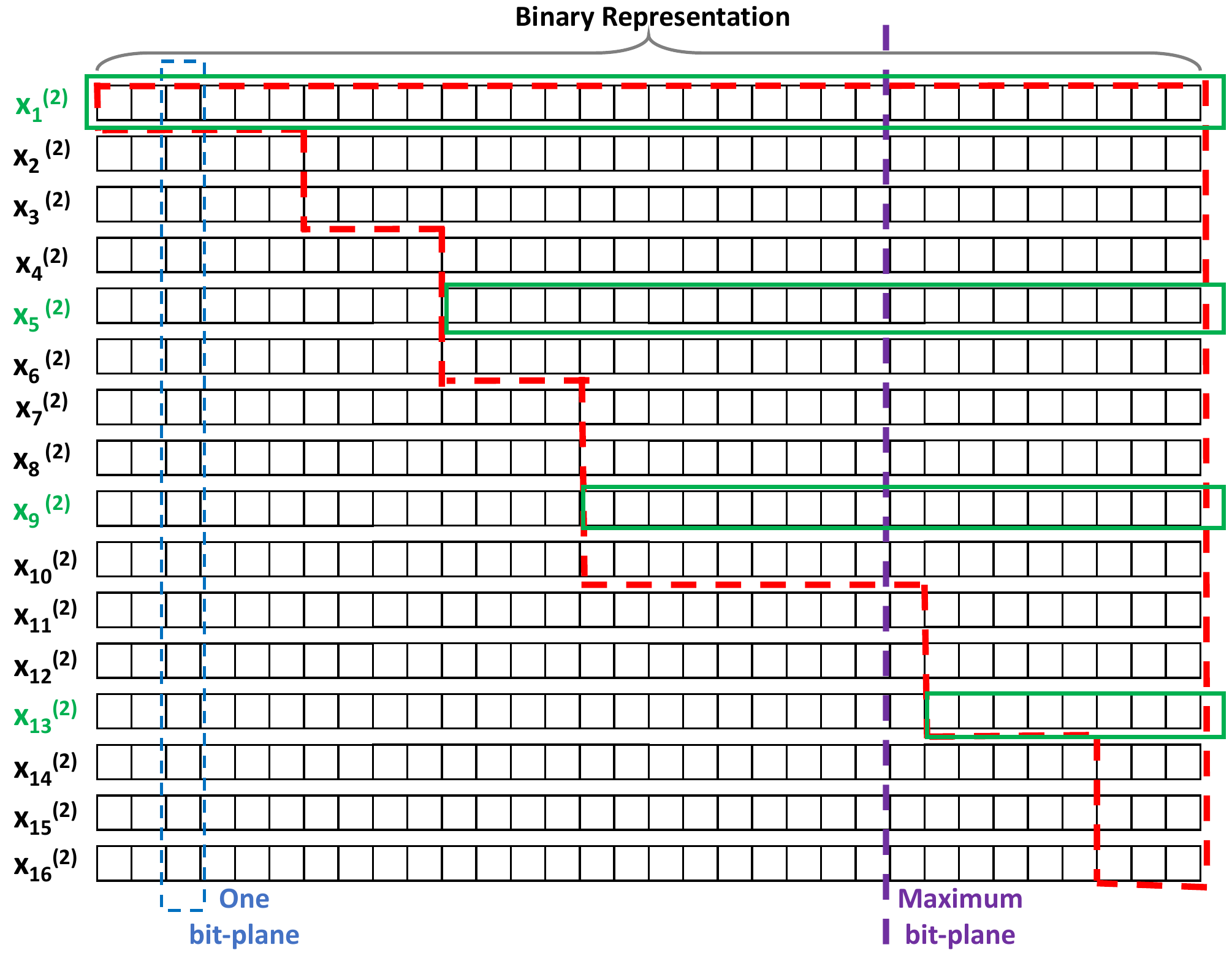}
\caption{Illustration of embedded coding scheme used in lossy compression.}
\label{fig:f2}
\vspace{-4mm}
\end{figure}

\subsubsection{Estimation of bit-rate}
To estimate the bit-rate after embedded coding, we need to estimate the number of significant bits (denoted by $n_{sb}$) for each value.
We also use a sampling approach to make an estimation.
Specifically, we first sample some data points (marked in green in the figure) and count their $n_{sb}$.
After that, we use these sampled data points and their $n_{sb}$ to interpolate $n_{sb}$ for the remaining data points.
We then calculate the average of all $n_{sb}$ (denoted by $\bar{n}_{sb}$) and use it as the approximate bit-rate value, that is, $\bar{n}_{sb}$.
The key reason we can adopt sampling and interpolation is that the significant bits of BOT-based transformed data exhibit a staircase shape, as discussed previously.

\subsubsection{Estimation of compression error}
Similar to the estimation of bit-rate, we can estimate the compression error, for example, MSE by calculating the MSE of all the sampled data points (denoted by $MSE_{sp}$). 
Note that after the first step of exponent alignment, different blocks may have different exponent offsets and maximum bit-planes. Hence, in order to calculate the overall MSE for all the sampled data points, each sampled data point's error is calculated by multiplication of its truncated error in binary representation and its block's exponent offset value. Finally, we can estimate the overall PSNR by the PSNR of the sampled data points: $PSNR_{sp} = -10 \cdot \log_{10} MSE_{sp} + 20 \cdot \log_{10} VR$.

We use $r_{sp}$ to denote the sampling rate in Stage I and $r_{sp}^{ec}$ to denote the sampling rate used in embedded coding.
We observe that low $r_{sp}^{ec}$ may significantly affect the estimation accuracy, but the estimation accuracy is not that sensitive to $r_{sp}$. Thus, as the default setting of our solution, we sample 3 data points for one 1D data block, 9 data points for one $4\times 4$ 2D data blocks, and 16 data points for one 3D $4\times 4 \times 4$ data block.
We adopt a low rate for sampling the data blocks such that our estimation model can achieve both high estimation accuracy and low overhead (illustrated later).

\subsection{Implementation of Proposed Online Selection Method}
We develop the automatic online selection method based on our proposed estimation model for two leading error-controlled lossy compressors.
As discussed above, we apply our estimation model to both SZ and ZFP to predict their compression quality accurately. Specifically, SZ adopts a multidimensional prediction model for its PBT in Stage I and linear quantization for its vector quantization in Stage II. We use Equations (\ref{eq:bitrate-sz}) and (\ref{eq:psnr-sz}) to predict its bit-rate and PSNR, respectively.
ZFP uses an optimized orthogonal transformation for its BOT in Stage I and group-testing-based EC \cite{zfp} for its EC in Stage II. We use $\hat n_{sb}$ and $PSNR_{sp}$ to predict its bit-rate and PSNR, respectively.

\begin{algorithm}
\caption{Proposed automatic online selection method for lossy compression of HPC scientific data sets}
\footnotesize
\renewcommand{\algorithmiccomment}[1]{/*#1*/}
\begin{flushleft}
\textbf{Input}:  Data fields $\{X_i\}_{i=1}^m$ to compress, user-set error bound $eb_{abs}$ or $eb_{rel}$, sampling rate $r_{sp}$ and $r_{sp}^{ec}$.\\
\textbf{Output}: Compressed-byte stream $\{C_i\}_{i=1}^m$ with selection bits $\{s_i\}_{i=1}^m$.
\end{flushleft}
\begin{algorithmic}[1]
\label{alg:alg1}
\FOR {each data field $X_i$ ($i = 1, \cdots, m$)}
\STATE Set $eb = eb_{abs}$ or $eb_{rel}\cdot VR$ (where $VR$ is the value range of $X_i$)
\STATE Sample data points from set $X_i$ blockwise to form subset $X_{sp}^{blk}$ with sampling rate $r_{sp}$
\STATE Sample data points from subset $ X_{sp}^{blk}$ pointwise to form subset $X_{sp}^{ec}$ with sampling rate $r_{sp}^{ec}$
\STATE Estimate bit-rate of ZFP (i.e., $BR_{zfp}$) by $\bar n_{sb}$ based on $X_{sp}^{ec}$ and $eb$
\STATE Estimate PSNR of ZFP (i.e., $PSNR_{zfp}$) by $PSNR_{sp}$
\STATE Calculate bin size $\delta$ based on $PSNR_{zfp}$ and Equation (\ref{eq:psnr-pdf-linear}) with $PSNR_{sz} = PSNR_{zfp}$
\STATE Construct approximate probability density function $P(\cdot)$ based on sampled data $X_{sp}^{blk}$
\STATE Estimate bit-rate of SZ (i.e., $BR_{sz}$) by Equation (\ref{eq:bitrate-sz}) based on $P(\cdot)$ and $\delta$
\IF {$BR_{sz} < BR_{zfp}$}
\STATE Perform SZ compression on $X_i$ with absolute error bound $2\cdot \delta$   
\ELSE
\STATE Perform ZFP compression on $X_i$ with absolute error bound $eb$
\ENDIF
\STATE Output compressed bytes $C_i$ and selection bit $s_i$ (e.g., $s_i = 0$ represents for SZ, $s_i = 1$ represents for ZFP)
\ENDFOR
\end{algorithmic}
\end{algorithm}

Our online selection method adopts the rate-distortion as a criterion to select the best-fit compression technique between SZ and ZFP. Specifically, for each field/variable, our solution first estimates ZFP's bit-rate and PSNR based on a given error-bound set by users. Next, it estimates SZ's bit-rate based on the PSNR estimated for ZFP, due to the high PSNR estimation accuracy in our model. Then, it selects the bestfit compressor with smaller bit-rate estimated and performs the corresponding lossy compression, as shown in Algorithm \ref{alg:alg1}. Note that the compression errors of ZFP follow a Gaussian-like distribution while those of SZ follow an uniform-like distribution \cite{peter-distribution}. Thus, in order to keep the same PSNR, ZFP needs a larger error bound as an input than does SZ. Accordingly,  with $PSNR_{sz} = PSNR_{zfp}$ (line 7), the calculated absolute error bound for SZ (i.e., $2\cdot \delta$) is smaller than the absolute error bound (i.e., $eb_{abs}$), which can guarantee the compression errors to be still bounded by $eb_{abs}$ point-wise after decompression.
\section{Evaluation Results and Analysis}
\label{sec:exper-eval}
In this section, we first describe the experimental platform and the HPC scientific data sets used for evaluation. We then evaluate the accuracy of our estimation model and analyze the time and memory overhead of our online selection method. We then present the experimental results based on a parallel environment with up to 1,024 cores.

\subsection{Experimental Setting and Scientific Simulation Data}
We conduct our experimental evaluations on the Blues cluster \cite{cluster} at Argonne Laboratory Computing Resource Center using 1,024 cores (i.e., 64 nodes, each with two Intel Xeon E5-2670 processors and 64 GB DDR3 memory, and each processor with 16 cores). The storage system uses General Parallel File Systems (GPFS).
These file systems are located on a raid array and served by multiple file servers. The I/O and storage systems are typical high-end supercomputer facilities. 
We use the file-per-process mode with POSIX I/O \cite{posix} on each process for reading/writing data in parallel \footnote{POSIX I/O performance is close to other parallel I/O performance such as MPI-IO \cite{mpiio} when thousands of files are written/read simultaneously on GPFS, as indicated by a recent study \cite{parallel-io}.}.
We perform our evaluations on various single floating-point data sets including 2D ATM data sets from climate simulations, 3D Hurricane data sets from the simulation of the hurricane Isabela, and 3D NYX data sets from cosmology simulation. The details of the data sets are described in Table \ref{tab: data}.
We use SZ-1.4.11 with the default mode and ZFP-0.5.0 with the fixed accuracy mode for the following evaluations.

\begin{table}
\centering
\caption{Data Sets Used in Experimental Evaluation}
\label{tab: data}
\begin{adjustbox}{max width=\columnwidth}
\begin{tabular}{|c|c|c|c|c|}
\hline
          & Data Source          &     \# Fields & Data Size & Example Fields \\  \hline
\textbf{NYX} & Cosmology  & 6     & $147$ GB & baryon\_density, temperature \\
\hline
\textbf{ATM}       & Climate   &  79    & $1.5$ TB & CLDHGH, CLDLOW\\ \hline
\textbf{Hurricane} & Hurricane & 13      & $62.4$ GB & QICE, PRECIP, U, V, W\\  \hline
\end{tabular}
\end{adjustbox}
\vspace{-4mm}
\end{table}

\subsection{Accuracy of Compression-Quality Estimation}
We evaluate our model based on three criteria: average error of estimating bit-rate, average error of estimating PSNR, and accuracy of selecting the best-fit compression technique under different sampling rates. Note that here we use PSNR instead of MSE because previous work \cite{zfp,sz17} usually adopt PSNR for rate-distortion evaluation.

Tables \ref{tab:accuracy-atm} and \ref{tab:accuracy-hurricane} show the average errors of bit-rate and PSNR under different sampling rates (i.e., 1\%, 5\%, and 10\%).
They exhibit that our estimation model has a relatively high accuracy in estimating PSNR with low sampling rate.
For example, for both SZ and ZFP, with 5\% sampling rate, the average PSNR estimation errors are within 2\% for the ATM data sets and within 4\% for the Hurricane data sets.

As for the bit-rate estimation, the experiments based on ATM and Hurricane data sets show that the bit-rate values estimated for SZ are always lower than the real bit-rate values after compression, and the estimation error can be up to 19\% in some cases.
The reason is that our model adopts the Shannon entropy theory (i.e., Equation (\ref{eq:bit-rate-pdf})) to approximate the bit-rate for SZ. Note that the entropy value is the optimal value in theory, while the designed/implemented entropy encoding algorithm (such as Huffman encoding) may not reach such a theoretical optimum in practice.
This situation typically happens when the data set has a lot of similar values such that it is easy to compress with a high compression ratio.

To address the above issue, we improve the estimation accuracy for SZ by introducing a positive offset, which is set to 0.5 bits/value based on our experiments using real-world simulation data.
Tables \ref{tab:accuracy-atm}, \ref{tab:accuracy-hurricane}, {\ref{tab:stdev-atm}, and \ref{tab:stdev-hurricane} present the accuracy (average and standard deviation of relative estimation error) of the offset-based estimation for SZ and the original estimation approach for ZFP. We can see that our final estimation model can always predict both the bit-rate and PSNR accurately for the two compressors. In relative terms, the bit-rate estimation errors fall into the interval [-8.5\%, 7.5\%] for SZ. Note that here the negative values represent that the estimated values are lower than the real values. As for ZFP, the bit-rate estimation errors are limited within 5.7\% on the ATM data sets and 0.9\% on the Hurricane data sets, when the sampling rate is higher than 5\%. 
The PSNR estimation errors vary from -3.5\% to -0.6\% when the sampling rate is set to 5\%. Hence, we suggest setting the sampling rate to 5\% in practice, which also has little time cost (presented latter).

As illustrated in Table \ref{tab:accuracy-atm} and \ref{tab:accuracy-hurricane}, our estimation of bit-rate is more accurate for ZFP than SZ in most instances. The reason is that we estimate the bit-rate by calculating the entropy value (i.e., Equation (5)) for SZ because of the Entropy encoding step (Huffman coding) adopted in SZ. As mentioned above, entropy value represents an optimal bit-rate (or lower bound) in theory, which leads to a certain estimation error.  
The tables also show that our estimation of PSNR is more accurate for SZ than ZFP under all the tested sampling rates. The key reason is that the symmetric distribution of prediction errors in SZ is not related to its prediction accuracy, but the staircase shape of transformed coefficients in ZFP is highly dependent on its transformation efficiency. Therefore, our PSNR modelling based on SZ’s quantization errors is more accurate than that based on ZFP’s truncation errors.
We also note that the standard deviation of bit-rate error is much higher for ZFP than SZ on the ATM data sets, as shown in Table \ref{tab:stdev-atm}. This is because ZFP’s block orthogonal transformation may have low decorrelation efficiency on certain fields in the ATM data sets, so the transformed coefficients can still have high correlation and the staircase shape (as shown in Figure \ref{fig:f2}) cannot be always established on these fields, which can result in large bit-rate error fluctuations and relatively high standard deviation.
In addition, we note that since our proposed estimation of PSNR is based on the approach to control the maximum $L_2$-norm-based compression error, the estimated PSNRs are always lower than the real PSNRs, leading to the negative PSNR errors shown in Table \ref{tab:accuracy-atm} and \ref{tab:accuracy-hurricane}.
Finally, it is worth noting that the Hurricane data sets have more high-compression-ratio variables than the ATM data sets. In other words, the Hurricane data sets are relatively easier to compress compared with the ATM data sets. Hence, using the entropy value (i.e., the optimal value in theory) to estimate the bit-rate is more accurate for the Hurricane data sets than for the ATM data sets. Consequently, considering the 0.5 bits/value offset for SZ, the bit-rate errors are always negative on the ATM data sets (as shown in Table \ref{tab:accuracy-atm} while positive on the Hurricane data sets (as shown in Table \ref{tab:accuracy-hurricane}).
In the future work, we can further improve our estimation method by introducing the offset unless the data set is relatively hard to compress.

We next evaluate the selection accuracy, which is calculated as the ratio of the number of correct selections to the total number of variables or data sets. The correct selection means our model make a correct decision by selecting the bestfit compression technique. The selection accuracy is 98.7\% on the Hurricane data sets and 88.3\% on the ATM data sets. In fact, the 1.3\% wrong selection brings only 0.08\% compression-ratio degradation on the Hurricane data sets, and the 11.7\% wrong selection leads to only 3.3\% compression-ratio degradation on the ATM data sets, as shown in Figure \ref{fig:f5}. The reason the wrong selections leads to little degradation is that almost all the wrong selections actually happen only when the two compressors exhibit close bit-rates with the same PSNR, such that selecting either of them may not affect the final overall compression quality by much.

\begin{table}
\centering
\footnotesize
\caption{Average Relative Error of Our Estimation Model for Compression Quality on 2D ATM Data Sets}
\label{tab:accuracy-atm}
\begin{tabular}{|l|c|c|c|c|c|c|}
\hline
\multirow{2}{*}{} & \multicolumn{2}{c|}{\textbf{$r_{sp} = 1\%$}} & \multicolumn{2}{c|}{\textbf{$r_{sp} = 5\%$}} & \multicolumn{2}{c|}{\textbf{$r_{sp} = 10\%$}} \\ \cline{2-7}
                  & \textit{SZ}                  & \textit{ZFP}               & \textit{SZ}            & \textit{ZFP}            & \textit{SZ}             & \textit{ZFP}            \\ \hline
\textbf{Bit-rate} & 7.5\%               & 5.7\%                      & 7.4\%                  & 5.7\%                   & 7.3\%                   & 5.6\%                   \\ \hline
\textbf{PSNR}     & -2.5\%               & -4.1\%                      & -1.1\%                  & -2.0\%                   & -0.6\%                   & -1.6\%                   \\ \hline
\end{tabular}
\end{table}

\begin{table}
\centering
\footnotesize
\caption{Average Relative Error of Our Estimation Model for Compression Quality on 3D Hurricane data sets}
\label{tab:accuracy-hurricane}
\begin{tabular}{|l|c|c|c|c|c|c|}
\hline
\multirow{2}{*}{} & \multicolumn{2}{c|}{\textbf{$r_{sp} = 1\%$}} & \multicolumn{2}{c|}{\textbf{$r_{sp} = 5\%$}} & \multicolumn{2}{c|}{\textbf{$r_{sp} = 10\%$}} \\ \cline{2-7}
                  & \textit{SZ}                  & \textit{ZFP}               & \textit{SZ}            & \textit{ZFP}            & \textit{SZ}             & \textit{ZFP}            \\ \hline
\textbf{Bit-rate}  & -4.5\%                & 8.0\%                      & -8.5\%                   & 0.9\%                   & -4.6\%                    & 0.9\%                   \\ \hline
\textbf{PSNR}     & -2.6\%               & -6.3\%                      & -1.1\%                  & -3.5\%                   & -0.8\%                   & -3.1\%                   \\ \hline
\end{tabular}
\vspace{-4mm}
\end{table}

\begin{table}
\centering
\footnotesize
\caption{Standard Deviation of Relative Estimation Error for Compression Quality on 2D ATM Data Sets}
\label{tab:stdev-atm}
\begin{tabular}{|l|c|c|c|c|c|c|}
\hline
\multirow{2}{*}{} & \multicolumn{2}{c|}{$r_{sp} = 1\%$} & \multicolumn{2}{c|}{$r_{sp} = 5\%$} & \multicolumn{2}{c|}{$r_{sp} = 10\%$} \\ \cline{2-7} 
                  & \textit{SZ}      & \textit{ZFP}     & \textit{SZ}      & \textit{ZFP}     & \textit{SZ}      & \textit{ZFP}      \\ \hline
\textbf{Bit-rate} & 8.9\%           & 23.9\%           & 8.8\%           & 23.6\%           & 8.8\%           & 23.5\%            \\ \hline
\textbf{PSNR}     & 5.6\%            & 6.0\%            & 3.1\%            & 4.0\%            & 1.5\%            & 3.8\%             \\ \hline
\end{tabular}
\end{table}

\begin{table}
\centering
\footnotesize
\caption{Standard Deviation of Relative Estimation Error for Compression Quality on 3D Hurricane Data Sets}
\label{tab:stdev-hurricane}
\begin{tabular}{|l|c|c|c|c|c|c|}
\hline
\multirow{2}{*}{} & \multicolumn{2}{c|}{$r_{sp} = 1\%$} & \multicolumn{2}{c|}{$r_{sp} = 5\%$} & \multicolumn{2}{c|}{$r_{sp} = 10\%$} \\ \cline{2-7} 
                  & \textit{SZ}      & \textit{ZFP}     & \textit{SZ}      & \textit{ZFP}     & \textit{SZ}      & \textit{ZFP}      \\ \hline
\textbf{Bit-rate} & 10.4\%           & 11.9\%           & 16.0\%           & 2.0\%            & 10.8\%           & 3.1\%             \\ \hline
\textbf{PSNR}     & 2.2\%            & 5.1\%            & 1.2\%            & 3.3\%            & 2.0\%            & 1.0\%             \\ \hline
\end{tabular}
\end{table}

\begin{table*}[t]
\centering
\footnotesize
\caption{Average Time Overhead for One Field Compared with Compression Time of SZ and ZFP on NYX, ATM, and Hurricane Data Sets}
\label{tab: overhead}
\begin{tabular}{|l|c|c|c|c|c|c|c|c|c|}
\hline
\multirow{2}{*}{}  & \multicolumn{3}{c|}{$r_{sp} = 1\%$} & \multicolumn{3}{c|}{$r_{sp} = 5\%$} & \multicolumn{3}{c|}{$r_{sp} = 10\%$} \\ \cline{2-10} 
                   & Time (sec.)     & SZ         & ZFP        & Time (sec.)     & SZ         & ZFP        & Time (sec.)    & SZ         & ZFP        \\ \hline
\textbf{NYX}       &  $1.8 \times 10^{-2}$         &  1.4\%          &   1.2\%        &    $7.4 \times 10^{-2}$       &  5.6\%          &      4.7\%      &      $1.3 \times 10^{-1}$     &     9.8\%       &     8.4\%       \\ \hline
\textbf{ATM}       & $6.0\times 10^{-3}$    & 1.5\%    & 1.9\%    & $2.0 \times 10^{-2}$    & 4.9\%    & 6.3\%    & $3.8 \times 10^{-2}$ & 9.2\%    & 11.9\%   \\ \hline
\textbf{Hurricane} & $1.6 \times 10^{-2}$    & 1.3\%    & 1.7\%    & $7.1 \times 10^{-2}$    & 5.4\%    & 7.2\%    & $1.2 \times 10^{-1}$    & 9.2\%    & 12.5\%   \\ \hline
\end{tabular}
\vspace{-4mm}
\end{table*}

\subsection{Overhead Analysis}
Next, we analyze the overhead of our automatic online selection method with respect to both time and memory.

\subsubsection{Time overhead}
Time overhead comes from two parts: the transformation of sampled data points in Step 1 (as shown in Figure \ref{fig:workflow}) and the estimation of compression quality in Step 2 (also shown in Figure \ref{fig:workflow}).
For the first part, the overhead of sampled data transformation is scaled linearly with the sampling rate $r_{sp}$. Hence, if we assume Stage I takes a percentage (denoted by $r_{stage1}$) of the total compression time, the overhead can be expressed as $O(r_{sp} \cdot r_{stage1} \cdot N)$, where $N$ is the number of data points. For example, $r_{stage1}$ is up to 60\% based on our experiments, so the time overhead of the sampled data transformation is up to $3\%$ of SZ's compression time, under a default sampling rate of $5\%$.
For the second part, when we estimate compression quality, the time complexity is $O(n)$ with vector quantization based on Equations (\ref{eq:bit-rate-pdf}) and (\ref{eq:psnr-pdf}), where $n$ is the number of quantization bins, which is very small in general compared with the data size $N$. Hence, the time overhead complexity with embedded coding is $O(r_{sp}\cdot r_{sp}^{ec} \cdot N)$.
Therefore, the overall time overhead can be expressed as $O(r_{sp}\cdot N)$ with a low constant coefficient, i.e., $O(r_{sp}^{ec}+r_{stage1})$.

Table \ref{tab: overhead} shows the time overhead on the NYX, ATM, and Hurricane data sets compared with the compression time of SZ and ZFP. It illustrates that the time overhead scales linearly with the sampling rate, consistent with our analysis.

\subsubsection{Memory overhead}
Memory overhead results from the storage of the approximate probability density function. It can be expressed as $O(n_{pdf})$, where $n_{pdf}$ is the number of bins used to represent the PDF. Note that although $n_{pdf}$ is larger than $n$, $n_{pdf}$ is still very small compared with the data size $N$. Specifically, we use $65,535$ quantization bins (i.e., $n_{pdf} = 655,35$) in our evaluation. The dimensions of each field in the ATM and Hurricane data sets are $1800\times3600$ and $100\times500\times500$ (i.e., $N = 6.48\times10^6, 2.5\times10^7$), respectively. Therefore, the memory overheads are about 1.0\% and 0.3\% on the ATM and Hurricane data sets, respectively.

\subsection{Analysis of Adaptability between Selection Methods Based on Fixed-PSNR vs. Fixed-Maximum-Error}
We compare our proposed selection method based on fixed PSNR to the solution based on fixed maximum error (proposed by Lu et al. \cite{ipdps18}) using the NYX, ATM, and Hurricane data sets, as shown in Figure \ref{fig:f9}. Their solution simply selects the compressor with the highest compression ratio based on a fixed error bound (called \emph{selection based on error bound}). 
Unlike their work that adopted point-wise relative error bound \cite{ipdps18}, we improved their selection method by using the absolute error bound instead, since both SZ and ZFP have better rate-distortions when using absolute error bound mode rather than using pointwise relative error bound mode, as confirmed in the previous studies \cite{zfp-online, drbsd2}. Specifically, for each data field, we set the absolute error bound to $10^{-3}$ of its value range. Figure \ref{fig:f9}(a) shows that the selection method based on error bound always chooses SZ as the best-fit compressor for all the tested fields because SZ always leads to the higher compression ratios than ZFP does on these fields given a specific error bound. 
We note that ZFP over-preserves the compression error with respect to the user-set error bound. Thus, ZFP may have a higher PSNR than does SZ, even if its compression ratio is lower. Our proposed method is designed to select the compressor that has lower bit-rate (i.e., higher compression ratio) with the same PSNR (called \emph{selection based on rate-distortion}), leading to better overall rate-distortion result. Figure \ref{fig:f9}(b) shows that our method can select the different best-fit compressors based on the rate-distortion for the different fields in the tested data sets.

\begin{figure}
\centering
\includegraphics[scale=0.36]{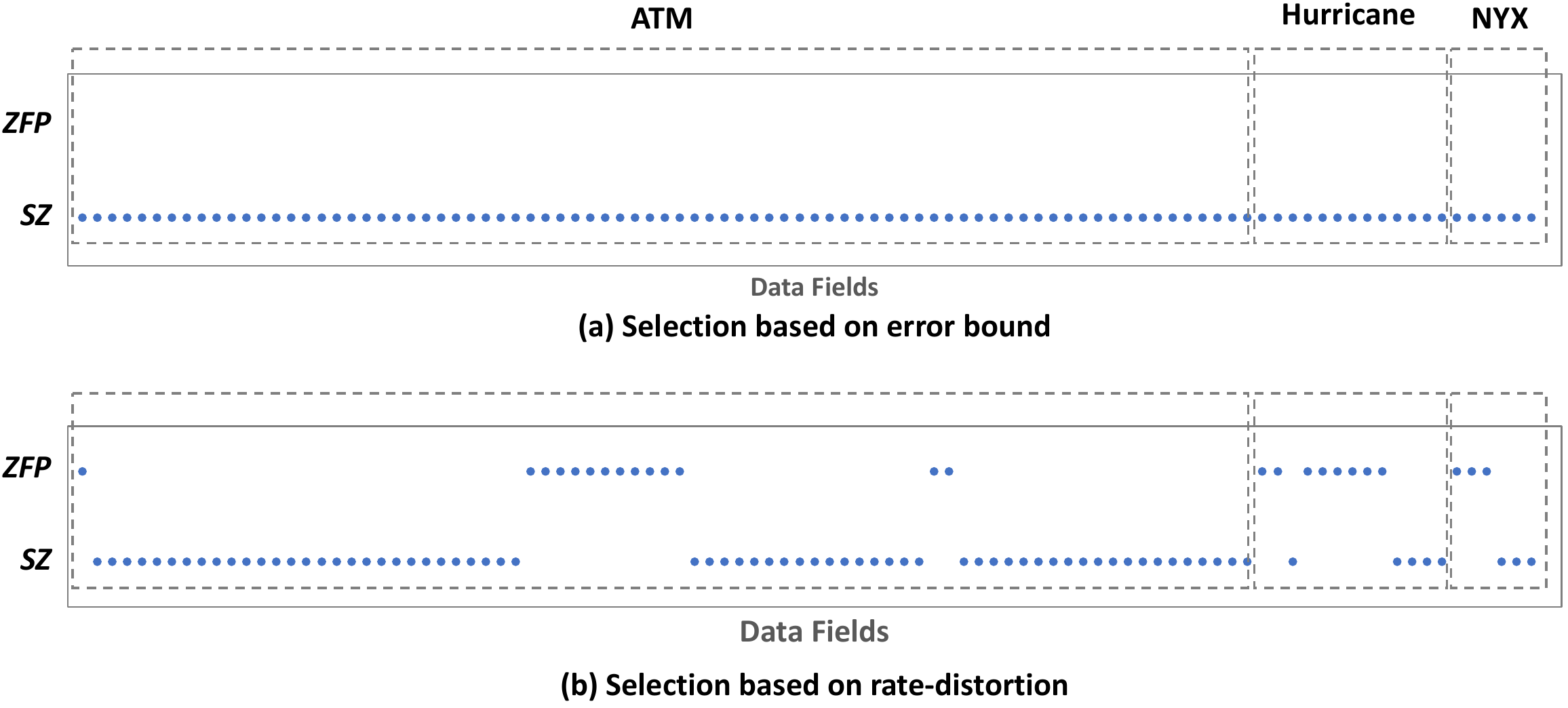}
\caption{Illustration of different selection methods, i.e., (a) selection based on error bound and (b) selection based on rate-distortion, on experimental data sets.}
\label{fig:f9}
\vspace{-4mm}
\end{figure}

\subsection{Empirical Performance Evaluation}
We evaluate the overall performance of our proposed solution in parallel. Let us first consider the compression ratio improvement achieved by our compressor. Figure \ref{fig:f5} shows that the compression ratio of SZ, ZFP, and our solution on the NYX, ATM, and Hurricane data sets with different error bounds. 
Our solution can outperform both SZ and ZFP because our online selection method attempt to select the better compression approach for each field in the data sets.
Note that the \textit{optimum bar} represents the compression ratios in an ideal case assuming that the best-fit compressors can always be selected for any fields in the data sets.
Specifically, the compression ratio of our solution outperforms that of the worst solution by 62\%, 36\%, 19\% on the Hurricane data sets, by 28\%, 38\%, 20\% on the ATM data sets, and 70\%, 17\%, 12\% on the NYX data sets with the value-range-based relative error bound $eb_{reb}$ of $10 ^ {-3}$, $10 ^ {-4}$, $10 ^ {-6}$, respectively.
We compare our solution with the worst solution because our proposed selection method can almost always select the best-fit compressor; however, a user is likely to keep using the same, but maybe the worst, compressor for all data sets.

\begin{figure}
\centering
\subfigure[{$eb_{rel}=10^{-3}$}]{\includegraphics[scale=0.52]{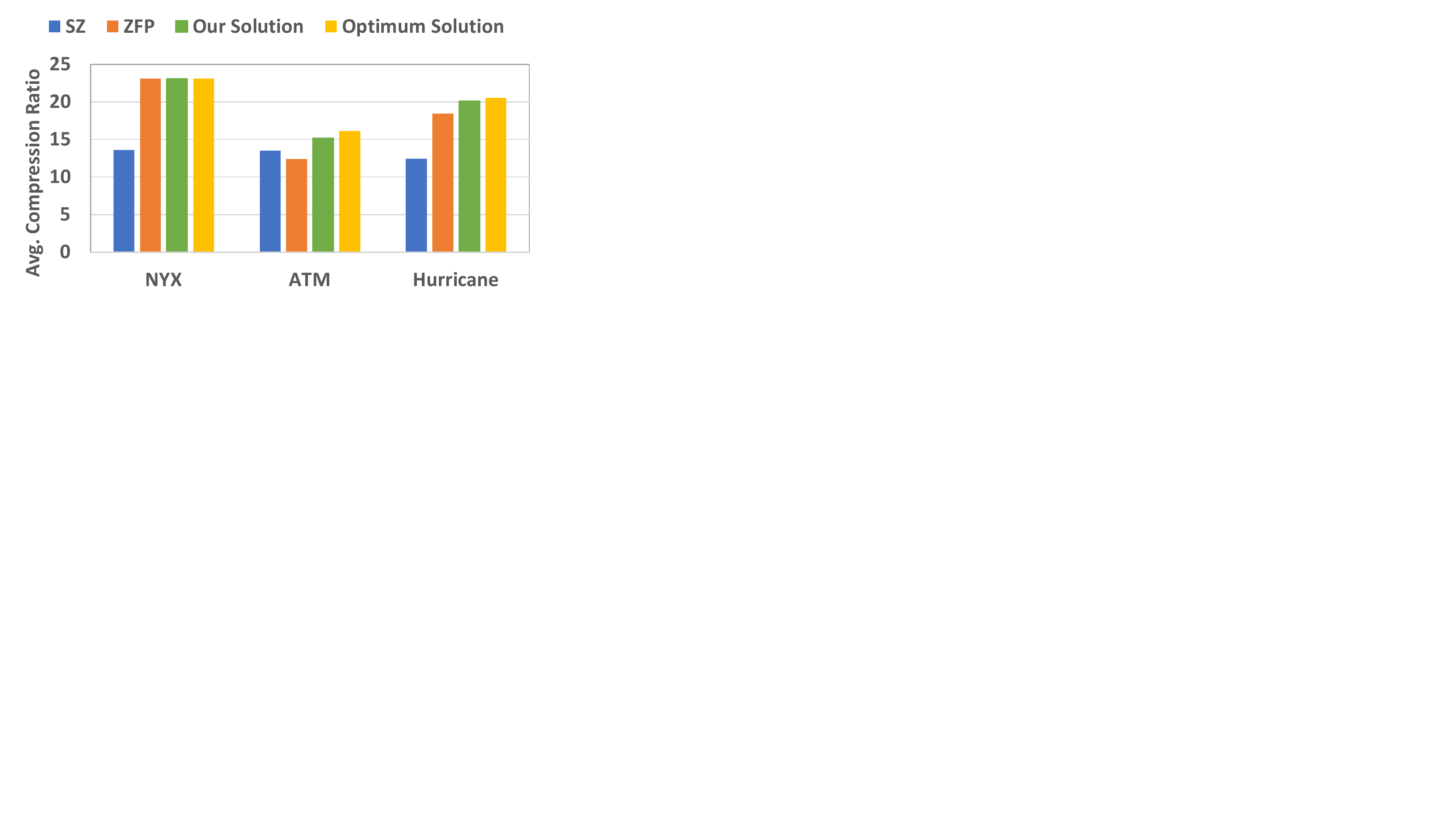}}

\subfigure[{$eb_{rel}=10^{-4}$}]{\includegraphics[scale=0.52]{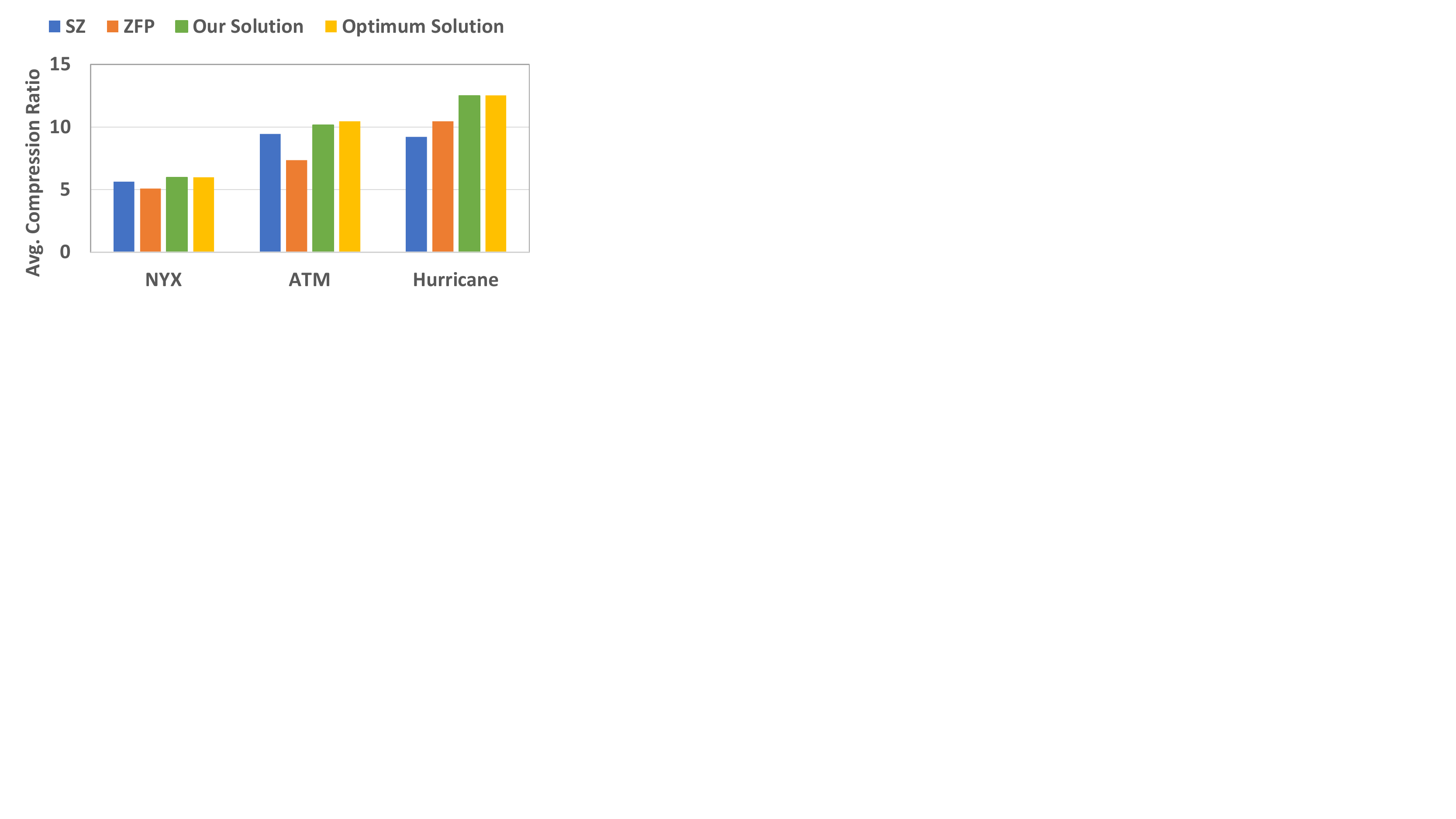}}

\subfigure[{$eb_{rel}=10^{-6}$}]{\includegraphics[scale=0.52]{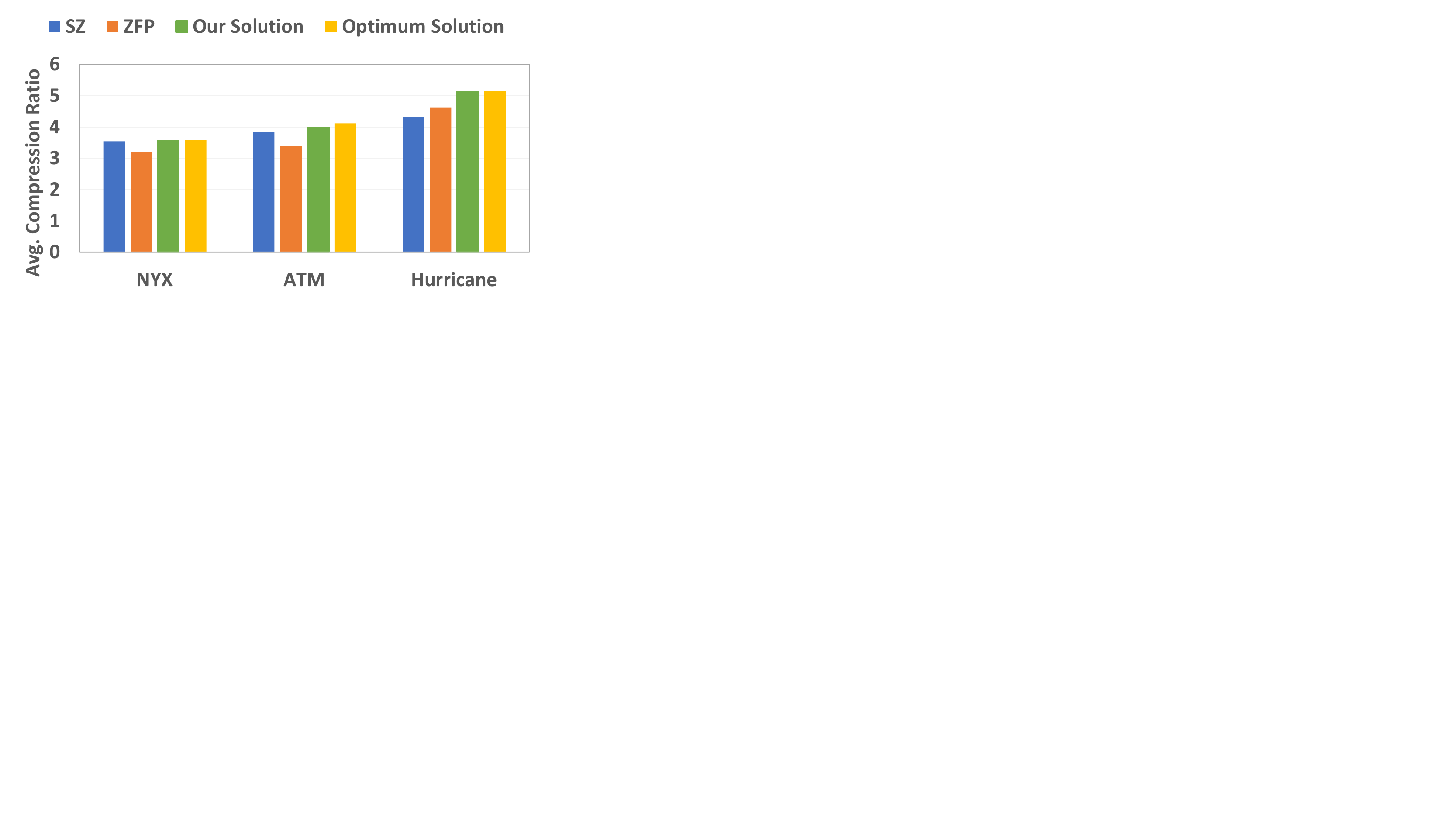}}
\caption{Average compression ratios of SZ, ZFP, and our solution on three application datasets (with the same PSNR across compressors on each field).}
\label{fig:f5}
\vspace{-6mm}
\end{figure}

In Figures \ref{fig:f6} and \ref{fig:f7}, we present the throughputs (in GB/s) of storing and loading data to GPFS with different solutions.
We increase the scale from 1 to 1,024 processes.
We set the value-range-based relative error bound $eb_{rel}$ to a reasonable value $10^{-4}$\cite{sz17}.
We test each experiment five times and use their average time to calculate the throughputs.
The storing and loading throughputs are calculated based on the compression/decompression time and I/O time.
We compare our solution with the other two solutions based on SZ and ZFP compressors and a baseline solution. The baseline solution is storing and loading the uncompressed data directly without any compression.
Figures \ref{fig:f6} and \ref{fig:f7} illustrate that our optimized compressor can achieve the highest stroing and loading throughputs compared with SZ and ZFP. Our optimized compressor can outperform the second-best solution by 68\% of the storing throughput and by 79\% of the loading throughput with 1,024 processes.
Our proposed solution has higher throughputs because it can achieve higher compression ratios than both SZ and ZFP with little extra overhead, so the time of writing and reading data is reduced significantly, leading to higher overall throughputs. 
Similarly, SZ has higher overall throughputs than ZFP does because of achieving a higher overall compression ratio on the tested data sets with the same PSNR, although the compression/decompression rates of SZ are lower than those of ZFP \cite{sz17}.
We note that the compression/decompression rates have a linear speedup with the number of processors (as illustrated in Figure 10 in \cite{sz17}) and more processes will lead to higher unexpected I/O contention and data management cost by GPFS when writing/reading data simultaneously. 
Hence, we expect that the performance gains of our solution compared with SZ and ZFP will further increase with scale because of the inevitable bottleneck of the I/O bandwidth.

\begin{figure}
\centering
 \includegraphics[scale=0.48]{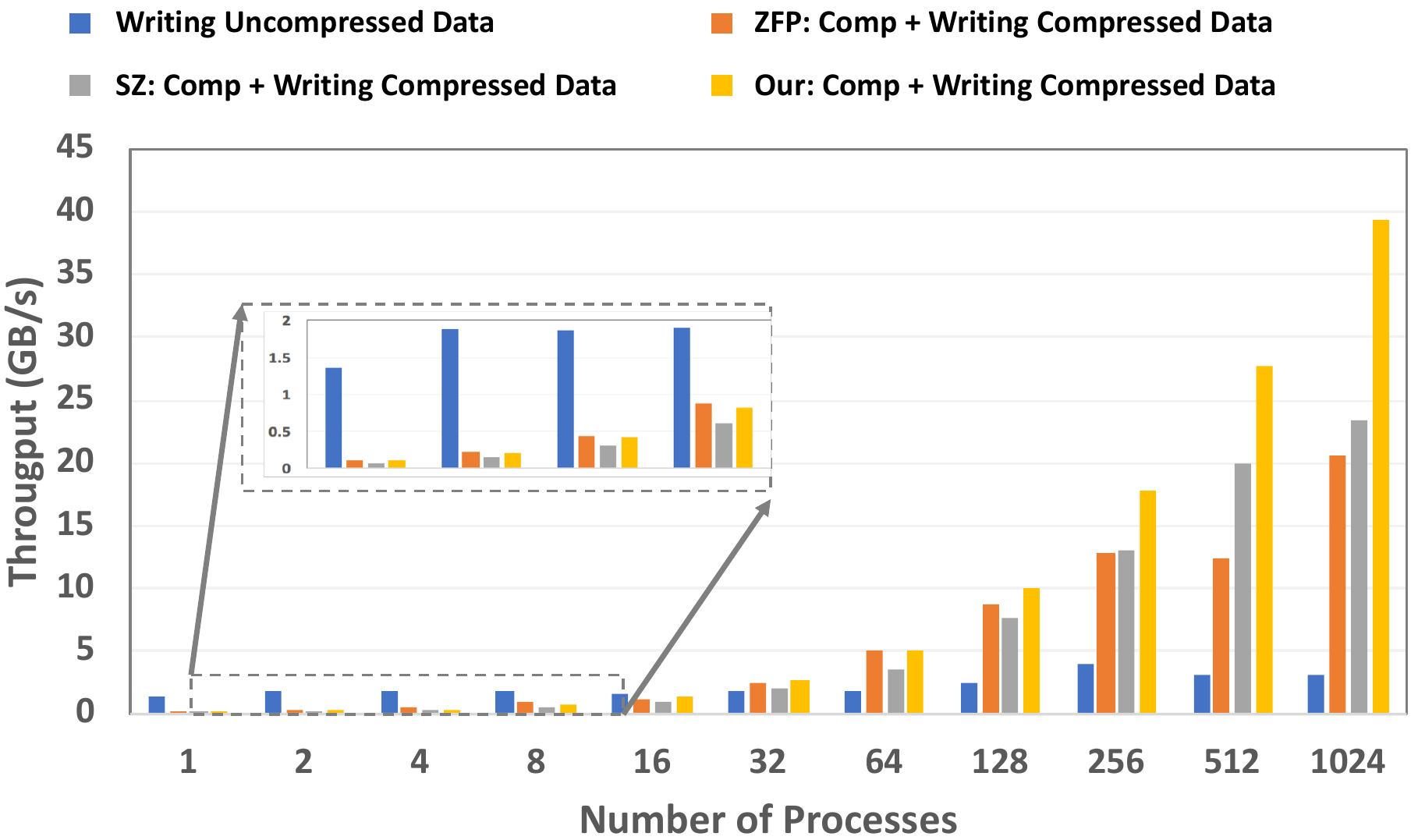}
 \caption{Throughputs of storing data (compression + I/O) with different solutions on the Hurricane data sets (with the same PSNR).}
\label{fig:f6}
\end{figure}

\begin{figure}
\centering
\includegraphics[scale=0.48]{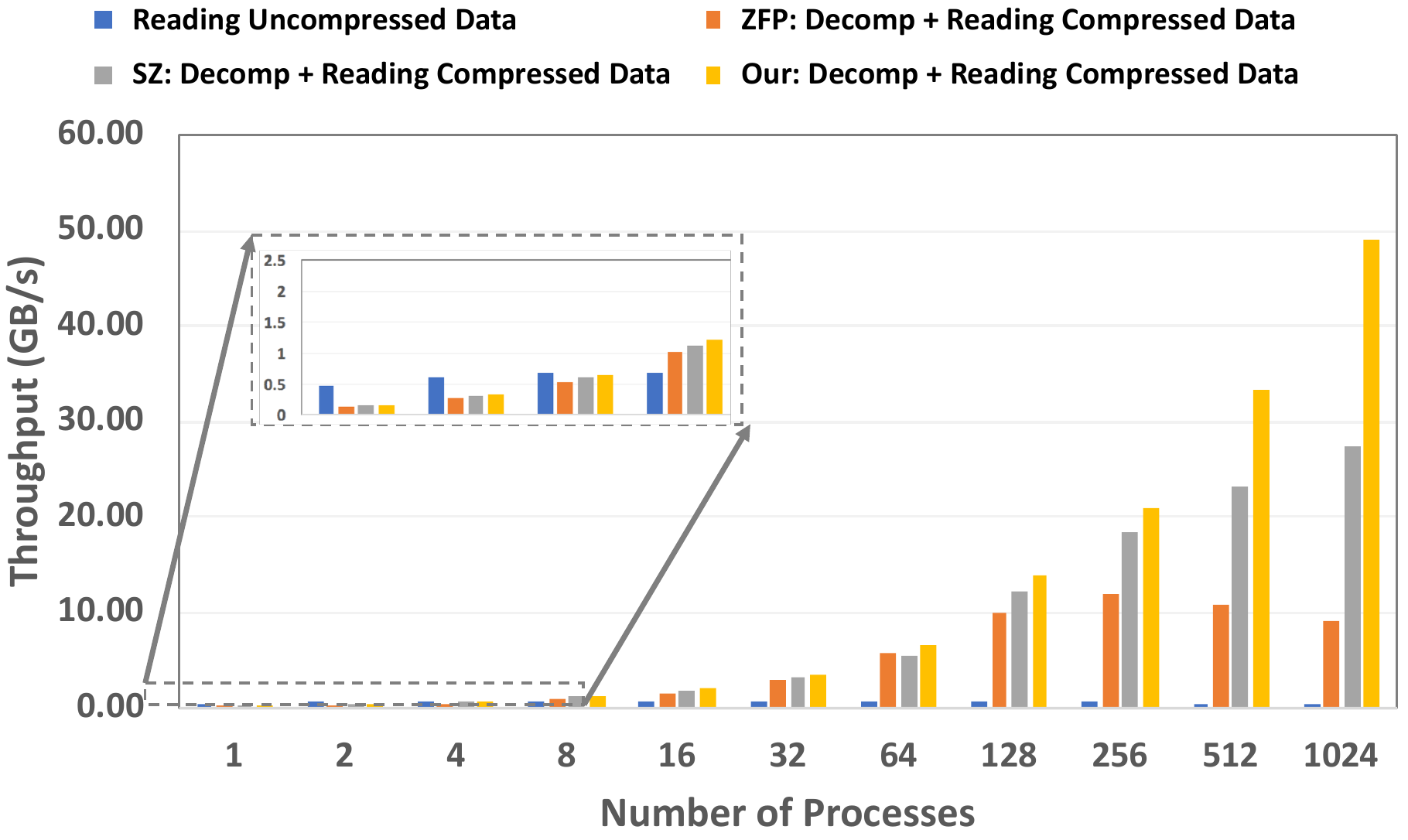}
 \caption{Throughputs of loading data (decompression + I/O) with different solutions on the Hurricane data sets (with the same PSNR).}
\label{fig:f7}
\vspace{-4mm}
\end{figure} 
\section{Conclusion}
\label{sec:conclusion}

In this paper, we propose a novel online, low-overhead selection method that can select the best-fit lossy compressor between two leading compressors, SZ and ZFP, optimizing the rate-distortion for HPC data sets, This is the first attempt to derive such an approach to the best of our knowledge. We develop a generic open-source toolkit/library under a BSD license.
We evaluate our solution on real-world production HPC scientific data sets across multiple domains in parallel with up to 1,024 cores. The key findings are as follows.
\begin{itemize}
\item The average error of our estimation with default sampling rate on bit-rate (i.e., compression ratio) can be limited to within 8.5\% and 5.7\% for SZ and ZFP, respectively.
\item The average error of our estimation with default sampling rate on PSNR (i.e., data distortion) can be limited to within 1.1\% and 3.5\% for SZ and ZFP, respectively.
\item The accuracy of selecting the best-fit compressor with default sampling rate is 88.3 $\sim$98.7\%, with little analysis/estimation time overhead (within 5.4\% and 7.3\% for SZ and ZFP, respectively).
\item Our solution improves the compression ratio by 12$\sim$70\% compared with that of SZ and ZFP, with the same distortion (PSNR) of the data.
\item The overall performance in loading and storing data can be improved by 79\% and 68\% on 1,024 cores with our solution, respectively, compared with the second-best solution.
\end{itemize}

We plan to extend our optimization solution (such as estimation model) to more error-controlled lossy compression techniques, including more quantization approaches and block-based transformations, to further improve the compression qualities for more HPC scientific data sets.
\section*{Acknowledgments}
\small
This research was supported by the Exascale Computing Project (ECP), Project Number: 17-SC-20-SC, a collaborative effort of two DOE organizations - the Office of Science and the National Nuclear Security Administration, responsible for the planning and preparation of a capable exascale ecosystem, including software, applications, hardware, advanced system engineering and early testbed platforms, to support the nation's exascale computing imperative. The submitted manuscript has been created by UChicago Argonne, LLC, Operator of Argonne National Laboratory (Argonne). Argonne, a U.S. Department of Energy Office of Science laboratory, is operated under Contract No. DE-AC02-06CH11357. We acknowledge the computing resources provided by LCRC at Argonne National Laboratory.

\vspace{-2mm}

\ifCLASSOPTIONcaptionsoff
  \newpage
\fi



%
\bibliographystyle{IEEEtran}
\bibliography{refs}

\vspace{-5mm}

%

\begin{IEEEbiography}[{\includegraphics[width=1in,height=1.25in,clip,keepaspectratio]{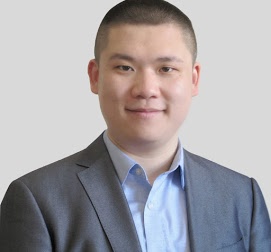}}]{Dingwen Tao}
Dingwen wen Tao received his bachelor's degree in mathematics from University of Science and Technology of China in 2013 and his Ph.D. degree in computer science from University of California, Riverside in 2018. He is currently an assistant professor of computer science in The University of Alabama. Prior to this, he worked at Brookhaven National Laboratory, Argonne National Laboratory, and Pacific Northwest National Laboratory. His research interests include high-performance computing, parallel and distributed systems, big data analytics, resilience and fault tolerance, scientific data compression, large-scale machine learning, numerical algorithms and software. He has published 15+ peer-reviewed papers in top HPC and big data conferences and journals, such as IEEE BigData, IEEE IPDPS, IEEE TPDS, ACM HPDC, ACM PPoPP, ACM/IEEE SC. Email: tao@cs.ua.edu.
\end{IEEEbiography}

\vspace{-5mm}

\begin{IEEEbiography}[{\includegraphics[width=1in,height=1.25in,clip,keepaspectratio]{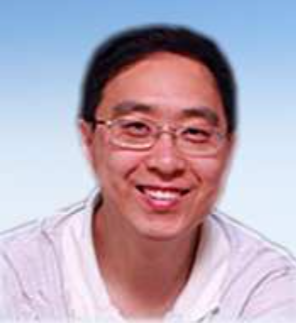}}]{Sheng Di}
Sheng Di received his master's degree from Huazhong University of Science and Technology in 2007 and Ph.D. degree from the University of Hong Kong in 2011. He is currently an assistant computer scientist at Argonne National Laboratory. Dr. Di's research interest involves resilience on high-performance computing (such as silent data corruption, optimization checkpoint model, and in-situ data compression) and broad research topics on cloud computing (including optimization of resource allocation, cloud network topology, and prediction of cloud workload/hostload). He is working on multiple HPC projects, such as detection of silent data corruption, characterization of failures and faults for HPC systems, and optimization of multilevel checkpoint models. Email: sdi1@anl.gov.
\end{IEEEbiography}

\vspace{-5mm}

\begin{IEEEbiography}[{\includegraphics[width=1in,height=1.25in,clip,keepaspectratio]{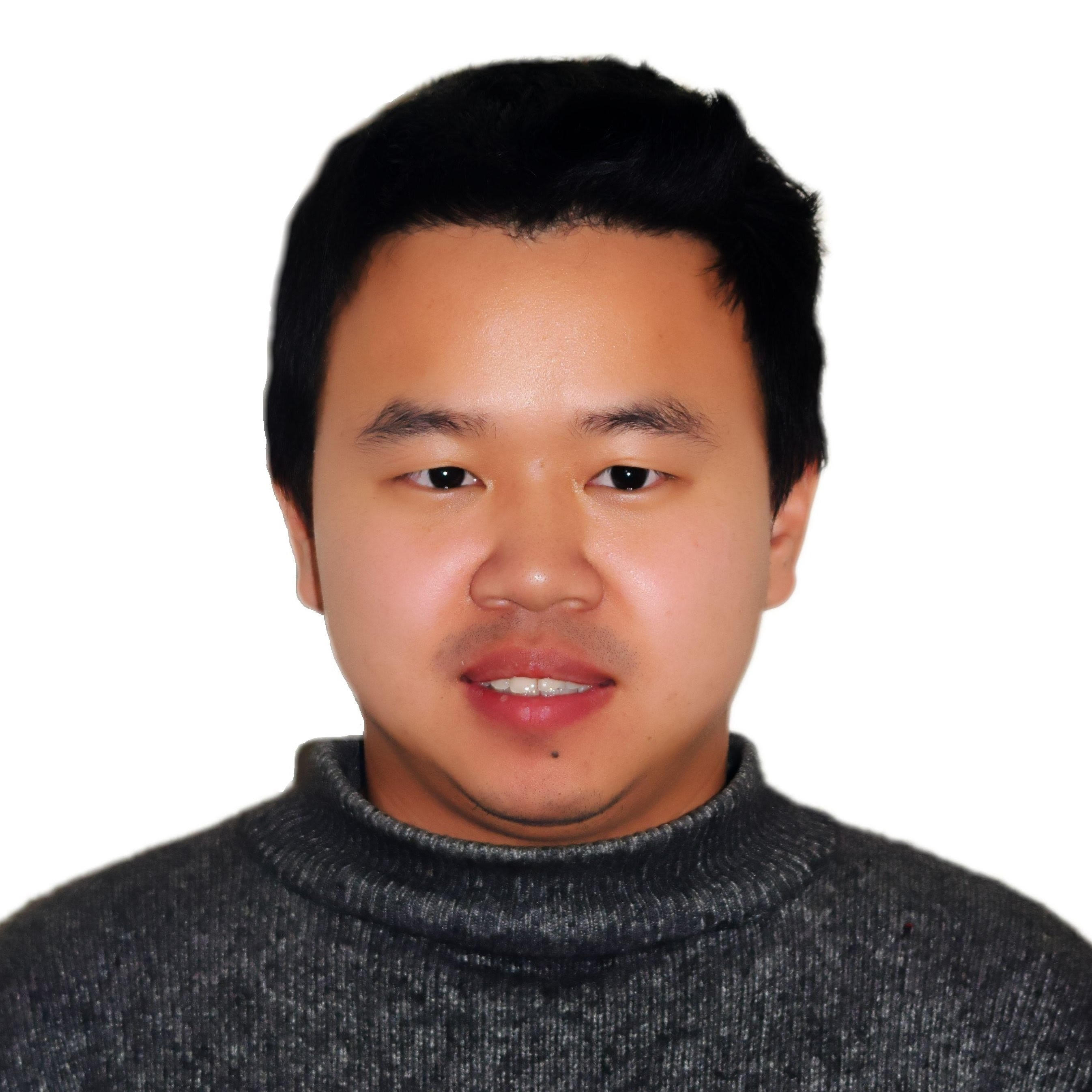}}]{Xin Liang}
Xin Liang received his bachelor's degree from Peking University in 2014 and will receive his Ph.D. degree from University of California, Riverside in 2019. His research interest includes parallel and distributed systems, fault tolerance, high-performance computing, large-scale machine learning, big data analysis and quantum computing. Email: xlian007@ucr.edu.
\end{IEEEbiography}

\vspace{-5mm}

\begin{IEEEbiography}[{\includegraphics[width=1in,height=1.25in,clip,keepaspectratio]{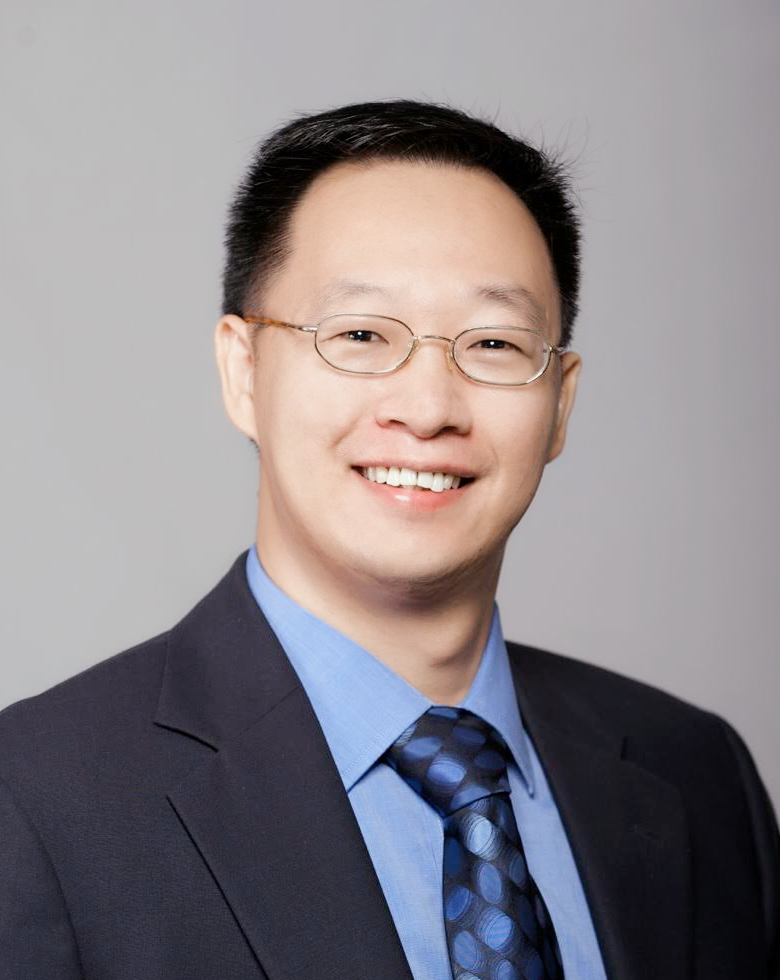}}]{Zizhong Chen}
Zizhong Chen received a bachelor's degree in mathematics from Beijing Normal University, a master's degree degree in economics from the Renmin University of China, and a Ph.D. degree in computer science from the University of Tennessee, Knoxville. He is an associate professor of computer science at the University of California, Riverside. His research interests include high-performance computing, parallel and distributed systems, big data analytics, cluster and cloud computing, algorithm-based fault tolerance, power and energy efficient computing, numerical algorithms and software, and large-scale computer simulations. His research has been supported by National Science Foundation, Department of Energy, CMG Reservoir Simulation Foundation, Abu Dhabi National Oil Company, Nvidia, and Microsoft Corporation. He received a CAREER Award from the US National Science Foundation and a Best Paper Award from the International Supercomputing Conference. He is a Senior Member of the IEEE and a Life Member of the ACM. He currently serves as a subject area editor for \textit{Elsevier Parallel Computing} journal and an associate editor for the \textit{IEEE Transactions on Parallel and Distributed Systems}. Email: chen@cs.ucr.edu.
\end{IEEEbiography}

\vspace{-5mm}

\begin{IEEEbiography}[{\includegraphics[width=1in,height=1.25in,clip,keepaspectratio]{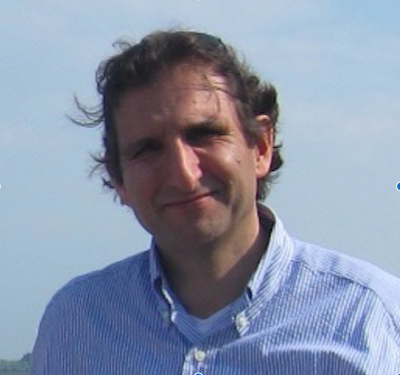}}]{Franck Cappello}
Franck Cappello is the director of the Joint-Laboratory on Extreme Scale Computing gathering six of the leading high-performance computing institutions in the world: Argonne National Laboratory, National Center for Scientific Applications, Inria, Barcelona Supercomputing Center, Julich Supercomputing Center, and Riken AICS. He is a senior computer scientist at Argonne National Laboratory and an adjunct associate professor in the Department of Computer Science at the University of Illinois at Urbana-Champaign. He is an expert in resilience and fault tolerance for scientific computing and data analytics. Recently he started investigating lossy compression for scientific data sets to respond to the pressing needs of scientist performing large-scale simulations and experiments. His contribution to this domain is one of the best lossy compressors for scientific data set respecting user-set error bounds. He is a member of the editorial board of the \textit{IEEE Transactions on Parallel and Distributed Computing} and of the \textit{ACM HPDC} and \textit{IEEE CCGRID} steering committees. He is a fellow of the IEEE. Email: cappello@mcs.anl.gov.
\end{IEEEbiography}




\end{document}